\tikzset{
    %Define standard arrow tip
    >=stealth',
    %Define style for small boxes
    punkt/.style={
           rectangle,
           rounded corners,
           draw=black, thick,
           text width=7em,
           minimum height=2em,
           text centered},
    %Define style for large boxes
    punktl/.style={
           re
           tangle,
           rounded corners,
           draw=black, thick,
           
           text width=7em,
           minimum height=2em,
           text centered},
    % Define arrow style
    pil/.style={
           ->,
           shorten <=4pt,
           shorten >=4pt,},
    pildotted/.style={
           ->,
           shorten <=4pt,
           shorten >=4pt,
  dotted,
  },
  external/system call={pdflatex \tikzexternalcheckshellescape 
                                        -halt-on-error
                                        -interaction=batchmode 
                                        -jobname "\image" "\texsource"
                                        && pdftops -eps "\image.pdf"}
}
\newcommand{\farv}[1]{}
\newcommand{\marginnote}[1]{}
\newtheorem{theorem}{Theorem}[section]
\newtheorem{definition}[theorem]{Definition}
\newtheorem{cor}[theorem]{Corollary}
\newtheorem{rem}[theorem]{Remark}
\newtheorem{example}{Example}[section]
\newtheorem{remark}[theorem]{Remark}
\newcommand{\bs}{\boldsymbol}
\newcommand{\dd}{\dif}
\newcommand{\e}{\mathrm{e}}
\newcommand{\Exp}{\mathbb{E}}
\newcommand{\Prob}{\mathbb{P}}
\newcommand\xqed[1]{%
  \leavevmode\unskip\penalty9999 \hbox{}\nobreak\hfill
  \quad\hbox{#1}}
\newcommand\demo{\xqed{$\circ$}}
\newcommand\demoo{\xqed{$\triangle$}}
\newcommand{\mat}[1]{\boldsymbol{\bs #1}}
\newcommand{\vect}[1]{\boldsymbol{\bs #1}}
\begin{document}

\title[Phase--type representations of stochastic interest rates]{Phase--type representations of stochastic interest rates with applications to life insurance}
\author{Jamaal Ahmad}
\address{Department of Mathematical Sciences, University of Copenhagen, Universitetsparken 5, DK-2100 Copenhagen \O, Denmark.}
\email{jamaal@math.ku.dk}
\author{Mogens Bladt}
\address{Department of Mathematical Sciences, University of Copenhagen, Universitetsparken 5, DK-2100 Copenhagen \O, Denmark.}
\email{bladt@math.ku.dk}

\maketitle
\allowdisplaybreaks

\begin{abstract}
{\farv{blue} The purpose of the present paper is to incorporate stochastic interest rates into a  matrix--approach to multi--state life insurance, where formulas for reserves, moments of future payments and equivalence premiums can be obtained as explicit formulas in terms of product integrals or matrix exponentials. To this end we consider the Markovian interest model, where the rates are piecewise deterministic (or even constant) in the different states of a Markov jump process, and which is shown to integrate naturally into the matrix framework. The discounting factor then becomes the price of a zero--coupon bond which may or may not be correlated with the biometric insurance process. Another nice feature about the Markovian interest model is that the price of the bond coincides with the survival function of a phase--type distributed random variable. This, in particular, allows for calibrating the Markovian interest rate models using a maximum likelihood approach to observed data (prices) or to theoretical models like e.g. a Vasi\v{c}ek model. Due to the denseness of phase--type distributions, we can approximate the price behaviour of any zero--coupon bond with interest rates bounded from below by choosing the number of possible interest rate values sufficiently large. For observed data models with few data points, lower dimensions will usually suffice, while for theoretical models the dimensionality is only a computational issue. 
 }

\vspace{5mm}

\noindent\textbf{Keywords:} Zero--coupon bond; Phase--type distribution; Stochastic interest rate; Multi-state life insurance; Thiele's differential equation.

\vspace{5mm}

\noindent\textbf{2020 Mathematics Subject Classification:} 60J28, 62M05, 91G05, 91G70.

\noindent\textbf{JEL Classification:} G22
\end{abstract}

\section{Introduction}
This paper considers stochastic interest models, which are state-wise deterministic dependent on an underlying finite state-space Markov process. The spot rate $r(u)$ at time $u$ is assumed to be on the form
\begin{equation}
   r(u) = r_{X(u)}(u),  \label{eq:spot-rate}
\end{equation}  
where $\{ X(u)\}_{u\geq 0}$ denotes a time-inhomogeneous Markov jump process on a $p$--dimensional state--space, and $r_i(u)$, $i=1,...,p$, are deterministic functions. 
{\farv{blue} Assuming an arbitrage free bond market, 
a  zero--coupon bond with terminal date $T$ can then be defined in terms of its prices by \marginnote{Ref $\# 1$\\ intro of $\mathbb{Q}$}
\begin{equation}
  B(t,T) = \Exp^{\mathbb{Q}} \!\left(\left.  \e^{-\int_t^T r_{X(u)}\dd u}\, \right|\, {\mathcal F}(t)  \right)\!, \ \ 0\leq t \leq T , \label{eq:bond}
\end{equation}
where ${\mathcal F}(t) = \sigma (X(u): 0\leq u \leq t)$ is the $\sigma$--algebra generated by $\{ X(u)\}_{u\geq 0}$.\ The expectation is taken under some risk--neutral measure $\mathbb{Q}$ (see, e.g., \cite{bjork2009, elliott-1999}). 
}
{\farv{blue} If all $r_i(u)\geq 0$, a key result of the paper is that, conditionally on $X(t)$, $T\rightarrow B(t,T)$ equals the survival function of an inhomogeneous phase--type distribution. 

In the presence of negative interest rates, this is longer certain since $B(t,T)$ \marginnote{AE} may be larger than one and non--monotone. However, assuming that the negative interest rates are bounded from below by a number $-\rho<0$, we get from \eqref{eq:bond} that
\begin{equation}
    \e^{-\rho(T-t)}B(t,T) = \Exp^{\mathbb{Q}} \!\left(\left.  \e^{-\int_t^T (r_{X(u)}+\rho)\dd u}\, \right|\, {\mathcal F}(t)  \right)\! \label{eq:neg-int}
\end{equation} 
 then equals a survival function of an inhomogeneous phase--type distribution. 
  }

{\farv{blue} The interpretation that the bond prices are (possibly scaled) phase--type survival functions enables us to fit (calibrate) the transition rates of $\{ X(u)\}_{u\geq 0}$ from the observed bond prices by using a maximum likelihood approach. Since phase--type distributions are dense, i.e. can approximate any distribution with a sufficient number of phases,  we may then fit a PH to the observed survival function (equivalent to a histogram) such that all observations (bond prices) are hit. The last point of observation may be considered right censored.
\marginnote{Ref $\#$ 2}
All fitted transition rates are under a risk--neutral measure $\mathbb{Q}$. }

The functional form of the state--wise price of the bond was noted already in \cite[(3.17)]{norberg_2003}, though its relation to phase--type theory was not mentioned, and its potential was not further explored. We also believe that the ``bond price representation'' \eqref{eq:bond} of a phase--type survival function is unknown to the phase--type community.

{\farv{blue}
In the context of multi-state life insurance, modelling stochastic interest rates also play a crucial role. The literature varies from SDE based models, see e.g. \cite{norberg-moller, mollersteffensen, Buchardt2014,banos2020life,asmussensteffensen}, \marginnote{AE} to the finite state--space Markov chain models of \cite{NorbergStokInt, NorbergHigherOrder} on the form \eqref{eq:spot-rate}.\ In the SDE-based methods, one often relies on an independence assumption between interest rates and biometric risk so that available forward rate curves can be used for valuation;\ an exception is \cite{Buchardt2014}, where dependence between interest rates and biometric risk is incorporated. In either case, the SDE-based models do not integrate into classic Thiele and Hattendorf type of results, which limits time-dynamic valuations based on these traditional methods. 
}

The spot rate model \eqref{eq:spot-rate}{\farv{blue}, however,} can be wholly incorporated into Thiele and Hattendorf type of differential equations for reserves and higher order moments, as shown by \cite{NorbergHigherOrder, NorbergStokInt} and further explored in \cite{norberg_2003}.  {\farv{blue} These observations allow for  dependency between interest rates and transitions in life insurance, as well as \marginnote{AE} time-dynamic valuations, without altering the traditional methods}. The latter refers to the model \eqref{eq:spot-rate} as the Markov chain market while \cite{koller} refers to it as Markovian interest intensities. 

{\farv{blue} In this paper, we work with an extended version of the bond prices,
\begin{equation}
   \Exp^{\mathbb{Q}} \!\left(\left. 1\{ X(T) =j \} \e^{-\int_t^T r_{X(u)}\dd u}\, \right|\, {\mathcal F}(t)  \right)\!, \ \ \ j=1,...,p ,  \label{eq:ontheevent}
\end{equation}
which in an insurance context are the discounting factors on the event that the terminal state will be $j$. Providing a matrix--representation for \eqref{eq:ontheevent}, 
we then find how it naturally integrates into the matrix framework of \cite{Bladt2020}. The extension is convenient from a mathematical point of view and also relates to the partial (\cite{Bladt2020}) and retrospective reserves in single states (\cite[Sec. 5E]{norberg1991}). The treatment of the latter, however, is outside the scope of the current paper.   We restate the results of the latter framework in the context of stochastic interest rates. The proofs, and parts of the exposition, will differ from that of \cite{Bladt2020}. 
}

{\farv{blue}  Markov jump processes in finance are often used in connection with regime switching models or where the different states are used to alter the parameters of usually SDE-driven processes. Here transitions can take place under some physical measure and may have a real--world interpretation. The Markov chain model for \marginnote{Ref. $\# 1$} interest rates \eqref{eq:spot-rate} can be thought of as a regime-switching model under a risk-neutral measure, particularly if the interest rates for each state are known a priori. 
}

The Markov jump process approach can approximate bond price modelling in terms of diffusions. Formal constructions have been made in \cite{bharucha-1997,kurtz-1970,kurtz-1978,mijatovic2013continuously}. Since phase--type distributions form a dense class of distributions on the positive reals, this paper will offer an alternative and parsimonious way to approximate any zero--coupon bond (arbitrarily close) by a bond on the form \eqref{eq:bond}.

The paper is organised as follows. Section \ref{sec:summary} introduces some background and notation. 
Bond price modelling using phase--type distribution is developed in Section \ref{sec:discount}. {\farv{blue} In Section \ref{sec:estimation}, we develop estimation of the Markovian interest rate model, both with and without restricted interest rates}, and we provide examples of calibration to diffusion models and real data. In Section \ref{sec:life} we adjust the life--insurance framework of \cite{Bladt2020} to allow for stochastic interest rates of the form \eqref{eq:spot-rate}. It contains examples of how to set up a model using the fitted bond parameters of Section \ref{sec:discount} as well as a matrix--based method for calculating the equivalence premium, either via Newton's method or as an explicit formula.  In Section \ref{sec:numex} we present a numerical example.  For the sake of exposition, the proofs are deferred to Appendix \ref{sec:proofs}.

\section{Background}\label{sec:summary}

\subsection{Notation}\mbox{ }

Unless otherwise stated, row vectors are denoted by bold Greek lowercase letters (e.g., $\vect{\pi}$) and column vectors by bold lowercase Roman letters (e.g., $\vect{v}$). Elements of vectors are denoted by the same unbold, indexed letters (like $\vect{v}=(v_1,...,v_p)^\prime$). 
The vector $\vect{e}_i$ is the column vector which is $1$ at index $i$ and zero otherwise whereas $\vect{e}=(1,1,...,1)^\prime$. 

Matrices are denoted by bold capital letters (Greek or Roman) and their elements by their corresponding lowercase indexed letters (e.g.$\mat{A}=\{ a_{ij} \}$). If $\vect{v}$ is a vector (row or column), then $\mat{\Delta}(\vect{v})$ denotes the diagonal matrix, which has $\vect{v}$ as diagonal. 

\subsection{The product integral}\mbox{ }

Consider a time-inhomogeneous Markov jump process $X = \{X(t)\}_{t\geq 0}$ taking values in a finite state space $E = \{1,\ldots,p\}$, with intensity matrix (functions) $\mat{M}(t) =\{\mu_{ij}(t)\}_{i,j\in E}$. Denote by $\mat{P}(s,t)=\{ p_{ij}(s,t) \}$ the corresponding transition matrix, the elements of which are the transition probabilities $p_{ij}(s,t)=\mathbb{P} (X(t)=j|X(s)=i)$ for $i,j\in E$. The transition matrix $\mat{P}(s,t)$ then satisfies Kolmogorov's forward and backward differential equations,
\begin{align}
\begin{split}
  \frac{\partial}{\partial t}\mat{P}(s,t)&=\mat{P}(s,t)\mat{M}(t),
\quad   \ \ \mat{P}(s,s) = \mat{I}, \\[0.2 cm]
    \frac{\partial}{\partial s}\mat{P}(s,t)&=-\mat{M}(s)\mat{P}(s,t), \quad \mat{P}(t,t)=\mat{I} . \label{eq:Kolmogorov-diff}
  \end{split}
   \end{align}
The solution to \eqref{eq:Kolmogorov-diff}, which in general is not explicitly available, will be denoted by 
 \begin{equation}
  \prod_s^t \left( \mat{I} + \mat{M}(x)\dd x  \right)  \label{eq:prod_int_notation}
\end{equation}
and referred to as the product integral of $\mat{M}(x)$ from $s$ to $t$. This is also true for general matrix functions $\mat{M}(t)$, which satisfy \eqref{eq:Kolmogorov-diff} but are not intensity matrices.

Product integrals have several nice properties. For any $s,t,u\geq 0$, it satisfies the product rule
\begin{equation}
   \prod_s^u (\mat{I} + \mat{M}(x)\dd x) =   \prod_s^t (\mat{I} + \mat{M}(x)\dd x) \prod_t^u (\mat{I} + \mat{M}(x)\dd x) , \label{eq:prod_int_prod_rule}
\end{equation}
which in turn implies that the product integral is invertible with
  \begin{equation}
    \left[ \prod_s^t (\mat{I} + \mat{M}(x)\dd x) \right]^{-1} = \prod_t^s (\mat{I} + \mat{M}(x)\dd x) . \label{eq:prod_int_inv} 
  \end{equation}
  If all $\mat{M}(x)$ commute, then
  \begin{equation}
  \prod_s^t (\mat{I} + \mat{M}(x)\dd x)  = \exp \left( \int_s^t \mat{M}(x)\dd x\right) . \label{eq:prod_int_commute}
  \end{equation}
  In particular, for $\mat{M}(x)\equiv \mat{M}$, we get
  \begin{equation}
  \prod_s^t (\mat{I} + \mat{M}(x)\dd x) = \e^{\mat{M}(t-s)} . \label{eq:prod_int_constant_matrix}
  \end{equation}
If $\mat{A}(x)$ and $\mat{B}(y)$ commute for all $x,y$, then
  \begin{equation}
  \prod_s^t (\mat{I} + (\mat{A}(x)+\mat{B}(x))\dd x ) = \prod_s^t (\mat{I} + \mat{A}(x)\dd x )\prod_s^t (\mat{I} + \mat{B}(x)\dd x ) .
  \label{eq:sum_commute} \end{equation}
In particular, 
\begin{equation}
\e^{-r(t-s)}\prod_s^t (\mat{I} + \mat{A}(x)\dd x )= \prod_s^t (\mat{I} + \left[ \mat{A}(x) - r\mat{I}\right]\dd x ) ,
  \label{eq:sum_commute1} \end{equation}
where $\mat{I}$ denotes the identity matrix.

{\farv{blue}
\begin{remark}\rm 
\marginnote{Ref. $\# 1$}The idea behind the notation of the product integral comes from a Riemann type of construction using step--functions. 
If we approximate $\mat{M}(x)$ by a piecewise constant matrix function taking values 
$\mat{M}(x_i)$ on $[x_i,x_i+\Delta x_i)$ for $s=x_0<x_1<\cdots <x_N=t$ and where $\Delta x_i=x_{i+1}-x_i$, then by \eqref{eq:prod_int_constant_matrix}   the product integral over $[x_i,x_i+\Delta x_i)$ equals the matrix exponential
\[  \e^{\mat{M}(x_i)\Delta x_i} = \mat{I} + \mat{M}(x_i)\Delta x_i + O(\Delta x_i^2) . \]
By letting $\Delta x_i \rightarrow 0$ and using \eqref{eq:prod_int_prod_rule} we then arrive at the notation \eqref{eq:prod_int_notation}. \demoo 
\end{remark}
}

 A valuable formula for computing integrals involving product integrals is the so--called Van--Loan's formula for product integrals (see \cite[Lemma 2]{Bladt2020}), which states that

\begin{equation}
 \small \arraycolsep=1.0pt\def\arraystretch{1.7}\prod_s^t \left( \mat{I}  +  \begin{pmatrix}
\mat{A}(u) & \mat{B}(u) \\
\mat{0} & \mat{C}(u)
\end{pmatrix} \! \dd u\right) = 
\begin{pmatrix}
\displaystyle\prod_s^t (\mat{I} + \mat{A}(u)\dd u) & \quad \displaystyle\int_s^t \prod_s^x (\mat{I} + \mat{A}(u)\dd u)\mat{B}(x)\prod_x^t (\mat{I} + \mat{C}(u)\dd u)\dd x \\ 
\mat{0} &\displaystyle\prod_s^t (\mat{I} + \mat{C}(u)\dd u)
\end{pmatrix}
\label{eqLvan-loan}
\end{equation}
This formula is valid for matrix functions $\mat{A}(x), \mat{B}(x)$ and $\mat{C}(x)$, which are piecewise continuous. The matrices $\mat{A}(x)$ and $\mat{C}(x)$ are square matrices of possibly different dimensions, so $\mat{B}(x)$ is not necessarily a square matrix. 

Let 
\[ \mat{C}(s,t) = \prod_s^t (\mat{I} + \mat{A}(x)\dd x) \otimes \mat{I} , \]
where $\otimes$ denotes the Kronecker product. {\farv{blue}
\marginnote{Ref. $\# 1$} The Kronecker product between a $p_1\times q_1$ matrix  $\mat{A}=\{a_{ij}\} $ and a $p_2\times q_2$ matrix $\mat{B} = \{  b_{ij} \}$ is defined as the $p_1p_2\times q_1 q_2$ matrix 
\[  \mat{A}\otimes \mat{B} = \{ a_{ij}\mat{B} \}_{i=1,...,p_1,j=1,...,q_1} = \{  a_{ij}b_{k\ell}  \} . \]} Using that $(\mat{A}\otimes\mat{B})(\mat{C}\otimes\mat{D})= (\mat{A}\mat{C})\otimes (\mat{B}\mat{D})$, we get
\begin{eqnarray*}
\frac{\partial}{\partial t} \mat{C}(s,t)&=& 
\prod_s^t (\mat{I} + \mat{A}(x)\dd x)\mat{A}(t)\otimes \mat{I} \\
&=& \left( \prod_s^t (\mat{I} + \mat{A}(x)\dd x)\otimes \mat{I}\right)\left(  \mat{A}(t)\otimes \mat{I}  \right)\\
&=&\mat{C}(s,t)\left(  \mat{A}(t)\otimes \mat{I}  \right),
\end{eqnarray*}
and we conclude that 
 \begin{equation}
  \mat{C}(s,t) = \prod_s^t (\mat{I} + (\mat{A}(x)\otimes \mat{I})\dd x) .  \label{eq:Kronecker_prod1}
\end{equation}
A similar argument gives that 
\begin{equation}
\mat{I} \otimes  \prod_s^t (\mat{I} + \mat{A}(x)\dd x) = \prod_s^t (\mat{I} + (\mat{I}\otimes \mat{A}(x))\dd x) .  \label{eq:Kronecker_prod2} 
\end{equation}
 Finally, if $\mat{A}(t)$ and $\mat{B}(t)$ are  Riemann integrable matrix functions of dimensions $q\times q$ and $p\times p$ respectively, then
 \begin{equation}
   \prod_s^t (\mat{I}+ (\mat{A}(x)\oplus \mat{B}(x))\dd x) =  \prod_s^t (\mat{I}+ \mat{A}(x)\dd x)\otimes 
 \prod_s^t (\mat{I}+ \mat{B}(x)\dd x),\label{eq:Kronecker-sum}
 \end{equation}
 where $\oplus$ denotes the Kronecker sum, defined by $\mat{A}(t)\oplus \mat{B}(t) = \mat{A}\otimes\mat{I} + \mat{I}\otimes \mat{B}(t)$, and where the first $\mat{I}$ has the dimension of $\mat{B}(t)$ and the second $\mat{I}$ has the dimension of $\mat{A}(t)$.  To see this, we notice that $\mat{A}(t)\otimes\mat{I}$ and $\mat{I}\otimes \mat{B}(t)$ commute, so by \eqref{eq:sum_commute} we get that 
  \begin{eqnarray*}
  \prod_s^t (\mat{I}+ (\mat{A}(x)\oplus \mat{B}(x))\dd x) &=&  \prod_s^t (\mat{I}+ (\mat{A}(x)\otimes \mat{I})\dd x)  \prod_s^t (\mat{I}+ (\mat{I}\otimes \mat{B}(x))\dd x) \\
  &=&  \left[ \prod_s^t (\mat{I}+ \mat{A}(x)\dd x) \otimes \mat{I} \right]   \left[\mat{I}\otimes \prod_s^t (\mat{I}+ \mat{B}(x)\dd x) \right] \\
  &=& \prod_s^t (\mat{I}+ \mat{A}(x)\dd x) \otimes \prod_s^t (\mat{I}+ \mat{B}(x)\dd x) .
  \end{eqnarray*}
For further details on Kronecker products and sums, we refer to \cite{Graham} 
 
\subsection{Phase--type distributions}\mbox{ }

Consider a (time--inhomogeneous) Markov jump process $\{ Y(t) \}_{t\geq 0}$, where state $p+1$ is absorbing and  $1,...,p$ are transient. The intensity matrix $\mat{M}(x)$ for $\{ Y(t) \}_{t\geq 0}$ is then on the form
 \begin{equation}
  \mat{M}(x) = \begin{pmatrix}
 \mat{T}(x) & \vect{t}(x) \\
 \vect{0} & 0
 \end{pmatrix} ,  \label{eq:PH_complete_matrix}
 \end{equation}
where $\mat{T}(x)$ is a $p \times p$ sub--intensity matrix {\farv{blue} \marginnote{Ref. $\# 1$} consisting of transition rates between transient states, and $\vect{t}(x) = -\mat{T}(x)\vect{e}$} is a column vector of exit rates, i.e.\ rates for jumping to the absorbing state. Then by Van-Loan's formula \eqref{eqLvan-loan}, the transition matrix for $\{ Y(t)\}_{t\geq 0}$ is given by 
\[   \mat{P}(s,t) = \prod_s^t \left( \mat{I}  + \begin{pmatrix}
\mat{T}(u) & \mat{t}(u) \\
\mat{0} & 0 
\end{pmatrix}\!  \dd u\right) =
\begin{pmatrix}
\displaystyle\prod_s^t (\mat{I} + \mat{T}(u)\dd u) & \quad \vect{e} - \displaystyle\prod_s^t (\mat{I} + \mat{T}(u)\dd u)\vect{e} \\ 
\mat{0} &1
\end{pmatrix}   . \]
Hence $\prod_s^t (\mat{I} + \mat{T}(u)\dd u) $ is the matrix which contains the transition probabilities between the transient states from times $s$ to $t$.

We assume that $\Prob (Y(0)=p+1)=0$, and define $\pi_i=\Prob (Y(0)=i)$. Hence $\vect{\pi}=(\pi_1,...,\pi_{p})$ satisfies that $\vect{\pi}\vect{e}=\sum_i \pi_i=1 $, so that $\vect{\pi}$ is the initial distribution for $\{ Y(t)\}_{t\geq 0}$ concentrated on the transient states only. Then  
\begin{equation}
  \left( \Prob(Y(t)=1),\Prob (Y(t)=2),...,\Prob (Y(t)=p)  \right) =\vect{\pi}\prod_0^t (\mat{I} + \mat{T}(u)\dd u)  \label{eq:PH_dist_at_time_t}
\end{equation} 
is a row vector that contains the probabilities of the process being in the different transient states at time $t$. 

Now let 
\[  \tau = \inf \{  t>0 : Y(t) = p+1 \}  \]
  denote the time until absorption. Then from \eqref{eq:PH_dist_at_time_t} we immediately get that
 \begin{equation}
  \mathbb{P} (\tau >t) = \vect{\pi}\prod_0^t (\mat{I} + \mat{T}(u)\dd u) \vect{e} \label{eq:tail-IPH}
\end{equation}
since the right-hand side equals the probability of the process belonging to {\it any } of the transient states by time $t$, i.e., absorption has not yet occurred. Differentiating \eqref{eq:tail-IPH} and using \eqref{eq:Kolmogorov-diff} we see that $\tau$ has a density on the form
\begin{equation}
f_\tau (x) =   \vect{\pi}\prod_0^x (\mat{I} + \mat{T}(u)\dd u) \vect{t}(x) . \label{eq:IPH-density}
\end{equation}

\begin{definition}
The distribution of $\tau$ is called an inhomogeneous phase--type distribution, and we write $\tau \sim \operatorname{IPH}(\vect{\pi},\mat{T}(x))$, where the indexation of $\mat{T}(x)$ is over $x\geq 0$. \end{definition}

 We do not need to specify $\vect{t}(x)$ since it is implicitly given by $\mat{T}(x)$. Indeed, since row sums of intensity matrices (and hence of \eqref{eq:PH_complete_matrix}) are zero, we have that $\vect{t}(x) = -\mat{T}(x)\vect{e}$. If $\mat{T}(x)\equiv \mat{T}$, then we simply write $\tau \sim \operatorname{PH}(\vect{\pi},\mat{T})$. This corresponds to the underlying Markov jump process being time--homogeneous. 

We also notice $\mat{T}(x) + \mat{\Delta}(\vect{t}(x))$ defines an intensity matrix (without the absorbing state).

The class of phase--type distributions (both PH and IPH) is dense (in the sense of weak convergence) in the class of distributions on the positive reals, implying that any distribution with support $\mathbb{R}_+$ may be approximated arbitrarily close by a phase--type distribution. This result is also of considerable practical importance since phase--type distributions can be fitted both to data and distributions using a maximum likelihood approach. 
For the time--homogenous case, PH, see \cite{AsmussenEM} while for IPH we refer to \cite{Albrecher-Bladt-Yslas-2020}.

\section{Phase--type representations of bond prices}\label{sec:discount} 
Consider the stochastic interest rate model of  \eqref{eq:spot-rate}, and let $E = \{1,\ldots,p\}$ denote the state--space of the Markov jump process $X = \{ X(t)\}_{t\geq 0}$ with intensity matrix  $\mat{M}(t)=\{ \mu_{ij}(t) \}_{i,j\in E}$. Let $\vect{r}(t) = \left(r_1(t),\ldots,r_p(t)\right)'$ be the column vector which contains the interest rate functions.

The main result of this section is the following result. 
\begin{theorem}\label{th:main_bond}
For $i,j\in E$, let
\[  d_{ij}(s,t) = \Exp\! \left(  \left. 1\{ X(t)=j\} \exp \left( -\int_s^t r_{X(u)}(u) \dd u \right)\right|  X(s)=i \right)\!, \quad s\leq t. \]
Then the matrix $\mat{D}(s,t)=\{  d_{ij}(s,t) \}_{i,j\in E}$ has the following representation 
\begin{equation}
  \mat{D}(s,t) = \prod_s^t \left(\mat{I} + \left[\mat{M}(u) -\mat{\Delta}(\vect{r}(u)) \right]\!\dd u  \right)  . \label{eq:D-matrix}
\end{equation}
\end{theorem}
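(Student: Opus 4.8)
The plan is to show that, for each fixed $s$, the map $t\mapsto\mat{D}(s,t)$ solves the linear matrix differential equation
\[ \frac{\partial}{\partial t}\mat{D}(s,t) = \mat{D}(s,t)\bigl(\mat{M}(t)-\mat{\Delta}(\vect{r}(t))\bigr), \qquad \mat{D}(s,s)=\mat{I}, \]
and then to invoke uniqueness of its solution: since $u\mapsto\mat{M}(u)-\mat{\Delta}(\vect{r}(u))$ is piecewise continuous, this equation has a unique solution, which by the very definition \eqref{eq:Kolmogorov-diff}--\eqref{eq:prod_int_notation} of the product integral is $\prod_s^t(\mat{I}+[\mat{M}(u)-\mat{\Delta}(\vect{r}(u))]\dd u)$. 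The initial condition is immediate, since at $t=s$ the integral in the exponent vanishes and $\Exp(1\{X(s)=j\}\mid X(s)=i)=1\{i=j\}$.

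First I would establish a Chapman--Kolmogorov-type identity for $\mat{D}$. Splitting $\int_s^{t+h}=\int_s^t+\int_t^{t+h}$ and conditioning on $\mathcal{F}(t)$, the Markov property of $X$ at time $t$ gives
\[ \Exp\!\left[ 1\{X(t+h)=j\}\,\e^{-\int_t^{t+h} r_{X(u)}(u)\dd u}\,\Big|\,\mathcal{F}(t)\right] = d_{X(t)j}(t,t+h), \]
whence $d_{ij}(s,t+h)=\sum_{k\in E}d_{ik}(s,t)\,d_{kj}(t,t+h)$, that is $\mat{D}(s,t+h)=\mat{D}(s,t)\,\mat{D}(t,t+h)$. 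Since $E$ is finite and the functions $r_i$ are piecewise continuous, the discount factors are uniformly bounded on any compact time interval, so the interchanges of expectation with limits used here and below are justified by dominated convergence.

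Next I would compute the short--time behaviour of $\mat{D}(t,t+h)$ as $h\downarrow0$ by conditioning on the number of jumps of $X$ in $[t,t+h]$. Starting from state $k$: with no jump (probability $1+\mu_{kk}(t)h+o(h)$) the process stays in $k$, so $X(t+h)=k$ and the discount factor is $\e^{-\int_t^{t+h}r_k(u)\dd u}=1-r_k(t)h+o(h)$; with exactly one jump $k\to\ell$ (probability $\mu_{k\ell}(t)h+o(h)$) one has $X(t+h)=\ell$ and discount factor $1+O(h)$; two or more jumps have probability $o(h)$. Collecting contributions yields $d_{kj}(t,t+h)=1\{k=j\}+\mu_{kj}(t)h-1\{k=j\}r_k(t)h+o(h)$, i.e.
\[ \mat{D}(t,t+h)=\mat{I}+\bigl(\mat{M}(t)-\mat{\Delta}(\vect{r}(t))\bigr)h+o(h). \]
Inserting this into $\mat{D}(s,t+h)=\mat{D}(s,t)\,\mat{D}(t,t+h)$, subtracting $\mat{D}(s,t)$, dividing by $h$ and letting $h\downarrow0$ gives the differential equation above as a right derivative; continuity of $\mat{D}(s,\cdot)$ together with the analogous expansion of $\mat{D}(t-h,t)$ upgrades this to a genuine derivative (with the usual caveat that at the at most countably many discontinuities of $\mat{M}$ or $\vect{r}$ the one--sided expansions use the corresponding one--sided limits of the coefficients), and since $s$ was arbitrary this proves \eqref{eq:D-matrix}.

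The step I expect to be the main obstacle is the rigorous bookkeeping of the $o(h)$ remainders in the short--time expansion --- uniformly enough to pass to the limit, and in particular the handling of the discontinuity points just mentioned; this is routine given the standing piecewise--continuity assumptions but is the only genuinely delicate point. An alternative route that bypasses most of this analysis is to first reduce, via the scaling $d_{ij}(s,t)\mapsto\e^{-\rho(t-s)}d_{ij}(s,t)$ together with \eqref{eq:sum_commute1}, to the case $r_i\ge0$ for all $i$, and then to recognise $\mat{D}(s,t)$ as the transient block of the transition matrix of the Markov jump process $X$ augmented with a cemetery state to which state $i$ jumps at the additional rate $r_i(u)$; applying Van Loan's formula \eqref{eqLvan-loan} to that augmented intensity matrix, whose transient part is exactly $\mat{M}(u)-\mat{\Delta}(\vect{r}(u))$, then delivers \eqref{eq:D-matrix} directly.
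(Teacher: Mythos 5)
Your proposal is correct and is essentially the paper's own argument: both condition on the behaviour of $X$ over an infinitesimal time interval to obtain a linear Kolmogorov-type ODE for $\mat{D}$ whose unique solution is the product integral, the only difference being that you differentiate forward in $t$ (via the semigroup identity $\mat{D}(s,t+h)=\mat{D}(s,t)\mat{D}(t,t+h)$) while the paper conditions on the state at $s+\dd s$ and obtains the backward equation $\frac{\partial}{\partial s}\mat{D}(s,t)=-(\mat{M}(s)-\mat{\Delta}(\vect{r}(s)))\mat{D}(s,t)$. Your write-up is in fact somewhat more careful about the $o(h)$ bookkeeping, and the killed-process/Van Loan alternative you sketch at the end is also valid, but neither changes the substance of the argument.
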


\begin{proof}
Conditioning on the state of $s+\dd s$, we get that
\begin{eqnarray*}
\lefteqn{d_{ij}(s,t)}~~\\
&=&(1+\mu_{ii}(s)\dd s) d_{ij}(s+\dd s,t)(1-r_{i}(s)\dd s )+ \sum_{k\neq i} \mu_{ik}(s)\dd s d_{kj}(s+\dd s,t)(1-r_i(s)\dd s) \\
&=&d_{ij}(s+\dd s,t)(1-r_{i}(s)\dd s ) + \mu_{ii}(s)\dd s d_{ij}(s+\dd s,t) + \sum_{k\neq i} \mu_{ik}(s)\dd s d_{kj}(s+\dd s,t)
\end{eqnarray*}
so that
\begin{eqnarray}\label{eq:dij_diff}
-\frac{\partial}{\partial s} d_{ij}(s,t)&=& -r_i(t)d_{ij}(s,t) + \sum_k \mu_{ik}(t)d_{kj}(s,t) .
\end{eqnarray}
In matrix form, this amounts to
\begin{align}\label{eq:discount_diff}
\frac{\partial}{\partial s} \mat{D}(s,t) = -\left( \mat{M}(s) -\mat{\Delta}(\vect{r}(s))  \right)\!\mat{D}(s,t). 
\end{align}
Noting that $\mat{D}(t,t) = \mat{P}(t,t) = \mat{I}$, we hence conclude that \eqref{eq:D-matrix} holds.
\end{proof}
{\farv{blue}  \begin{rem}\rm
\marginnote{Ref. \#1}
The quantities $d_{ij}(s,t)$ in Theorem \ref{th:main_bond} are introduced as $\vect{r}$--deflated transition probabilities in \cite[Appendix 1]{buchardtfurrermoller}, where the authors derive the differential equation \eqref{eq:discount_diff}.\ While they give a martingale--based proof, we provide a probabilistic sample path argument and give a product integral representation.\demoo      
\end{rem}
}
\begin{rem}\rm
Multiplying both sides of \eqref{eq:discount_diff} with $\mat{e}$ from the right, we recover the differential equation for the state-wise discount factors obtained in \cite[(4.4)]{NorbergStokInt}.\demoo
\end{rem}

{\farv{blue} Assume that all $r_i(x)$ are bounded from below, and let
\[   \rho = \max \left( 0, -\min_{i\in E} \inf_{x\geq 0} r_i(x)  \right)  . \]
Then $\rho=0$ if all interest rates are non--negative, and otherwise $-\rho$ provides a lower bound for all of them. Then we have the following result.
\begin{theorem}\label{th:main_bond}
  The price of the zero--coupon bond \eqref{eq:bond} satisfies  
 \begin{equation}
  B(t,T) =   \Exp^{\mathbb{Q}} \left( \left. \exp \left( -\int_t^T r_{X(u)}(u) \dd u \right)\right| X(t) \right) =\vect{e}_{X(t)}^\prime \mat{D}(t,T)\vect{e} . \label{eq:bond1}
\end{equation}
Conditional on $X(t)=i$, 
\[    T\rightarrow \e^{-\rho(T-t)} B(t,T)   \]
is the survival function for an IPH  distributed random variable, $\tau (t)$, with initial distribution $\vect{e}_i^\prime$ and intensity matrices $\mat{M}(x+t)-\mat{\Delta}(\vect{r}(x+t))-\rho\mat{I}$, $x\geq 0$.

In particular, if all interest rates are non--negative, then $\rho=0$ and the price itself, $ T\rightarrow B(t,T)  $ becomes the survival function.
\end{theorem}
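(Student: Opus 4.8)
The plan is to deduce both claims from the representation \eqref{eq:D-matrix} together with the product--integral identities of Section~\ref{sec:summary}; throughout, recall that under $\mathbb{Q}$ the process $X$ is a Markov jump process with intensity matrix $\mat{M}$, so the preceding theorem applies with $\Exp^{\mathbb{Q}}$ in place of $\Exp$. To obtain \eqref{eq:bond1}, I would first note that $\int_t^T r_{X(u)}(u)\dd u$ is a functional of $(X(u))_{u\in[t,T]}$, so the Markov property lets the conditioning on $\mathcal{F}(t)$ in \eqref{eq:bond} be replaced by conditioning on $X(t)$; summing the entries of $\mat{D}(t,T)$ over the terminal state gives $\sum_{j\in E} d_{ij}(t,T) = \vect{e}_i^\prime\mat{D}(t,T)\vect{e}$, which by the definition of $d_{ij}$ equals $\Exp^{\mathbb{Q}}(\exp(-\int_t^T r_{X(u)}(u)\dd u)\mid X(t)=i)$. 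Evaluating at $i=X(t)$ yields \eqref{eq:bond1}.

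For the phase--type statement, fix $X(t)=i$ and write, using \eqref{eq:bond1}, $\e^{-\rho(T-t)}B(t,T) = \vect{e}_i^\prime\bigl(\e^{-\rho(T-t)}\mat{D}(t,T)\bigr)\vect{e}$. Applying \eqref{eq:sum_commute1} with $r=\rho$ and $\mat{A}(x)=\mat{M}(x)-\mat{\Delta}(\vect{r}(x))$ to \eqref{eq:D-matrix} gives
\[
  \e^{-\rho(T-t)}\mat{D}(t,T) = \prod_t^T\bigl(\mat{I}+[\mat{M}(x)-\mat{\Delta}(\vect{r}(x))-\rho\mat{I}]\dd x\bigr).
\]
A change of integration variable $x\mapsto x+t$ --- legitimate because both sides, as functions of $T-t$, satisfy the same linear matrix differential equation of the form \eqref{eq:Kolmogorov-diff} with initial value $\mat{I}$ and hence agree by uniqueness --- rewrites the right-hand side as $\prod_0^{T-t}(\mat{I}+\mat{T}(x)\dd x)$ with $\mat{T}(x):=\mat{M}(x+t)-\mat{\Delta}(\vect{r}(x+t))-\rho\mat{I}$, $x\geq 0$. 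Thus $\e^{-\rho(T-t)}B(t,T) = \vect{e}_i^\prime\prod_0^{T-t}(\mat{I}+\mat{T}(x)\dd x)\vect{e}$.

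It then remains to check that $\mat{T}(x)$ is a genuine sub--intensity matrix, for then \eqref{eq:tail-IPH} identifies the right-hand side as the survival function of an $\operatorname{IPH}(\vect{e}_i^\prime,\mat{T}(\cdot))$ variable $\tau(t)$ evaluated at $T-t$, i.e.\ $T\mapsto\mathbb{P}(\tau(t)>T-t)$. Since $\mat{\Delta}(\vect{r}(x+t))+\rho\mat{I}$ is diagonal, the off-diagonal entries of $\mat{T}(x)$ coincide with those of the intensity matrix $\mat{M}(x+t)$ and are non-negative, while its exit vector is $-\mat{T}(x)\vect{e}=\vect{r}(x+t)+\rho\vect{e}$, which is non-negative exactly because $\rho$ was chosen with $-\rho\leq r_i(x)$ for all $i$ and all $x\geq 0$; this is the one place the lower boundedness of the rates enters, and it is the step I would treat most carefully. (In the borderline case where some coordinate of $\vect{r}(x+t)+\rho\vect{e}$ vanishes on a non-null set of times the corresponding exit rate is $0$ and $\tau(t)$ becomes a defective phase--type variable, which matches $T\mapsto B(t,T)$ not decaying to $0$.) Finally, if all $r_i\geq 0$ then $\rho=0$, so $\e^{-\rho(T-t)}=1$ and $B(t,T)$ is itself the survival function, which is the last assertion.
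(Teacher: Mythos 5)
Your proposal is correct and follows essentially the same route as the paper's own proof: obtain \eqref{eq:bond1} by post--multiplying $\mat{D}(t,T)$ by $\vect{e}$, then apply \eqref{eq:sum_commute1} to absorb the factor $\e^{-\rho(T-t)}$ into the product integral and identify the resulting matrix as a sub--intensity matrix defining the IPH representation. You are in fact somewhat more careful than the paper, which asserts without verification that $\mat{M}(x)-\mat{\Delta}(\vect{r}(x))-\rho\mat{I}$ is a sub--intensity matrix, whereas you explicitly check the non--negativity of the off--diagonal entries and of the exit vector $\vect{r}(x+t)+\rho\vect{e}$, correctly isolating where the lower bound $-\rho$ on the rates is used.
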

\begin{proof} 
The formula  \eqref{eq:bond1} follows directly from the construction of the $\mat{D}(s,t)$ matrix by summing out over $j$ in $d_{ij}(t,T)$, which corresponds to post--multiplying $\mat{D}(t,T)$ by $\vect{e}$.  Next, we notice that
\[   e^{-\rho(T-t)}\prod_s^t \left(\mat{I} + \left[\mat{M}(u) -\mat{\Delta}(\vect{r}(u)) \right]\!\dd u  \right)  = \prod_s^t \left(\mat{I} + \left[\mat{M}(u) -\mat{\Delta}(\vect{r}(u)) - \rho\mat{I} \right]\!\dd u  \right)  , \]
which follows from \eqref{eq:sum_commute1}. 
The matrix $\mat{M}(x)-\mat{\Delta}(\vect{r}(x))-\rho\mat{I}$ is a sub--intensity matrix, which together with the distribution for $X(t)$  defines a phase--type representation $(\vect{\pi}_t,\mat{M}(x+t)-\mat{\Delta}(\vect{r}(x+t)-\rho\mat{I})$, $x\geq 0$ (starting at time $t$).
\end{proof}
The forward rate $f(t,T)$ is defined by 
\[  f(t,T) = -\frac{\partial}{\partial T}\log B(t,T) . \]
Using Theorem \ref{th:main_bond}, we may write
\[   B(t,T) = \e^{\rho(T-t)}\bar{F}_{\tau (t)}(T)   ,\]
where $\bar{F}_{\tau (t)}(T)=1-F_{\tau (t)}(T) $ denotes the survival function for $\tau (t)\sim \mbox{IPH}(\vect{e}_{X(t)}^\prime,\mat{M}(x+t)-\mat{\Delta}(\vect{r}(x+t))-\rho\mat{I})$.
Then  
\[   -\frac{\partial}{\partial T}\log B(t,T) = -\rho + \frac{f_{\tau (t)}(T)}{1-F_{\tau (t)}(T)} , \]
where $f_{\tau (t)}$ denotes the density function for $\tau (t)$. Hence we have proved the following result. 
\begin{cor}
Conditional on $X(t)=i$, the forward rate $f(t,T)$ equals the hazard rate at $T$ for the random variable $\tau (t) \sim \operatorname{PH}(\vect{e}_i,\mat{M}(x+t)-\mat{\Delta}(\vect{r}(T))-\rho\mat{I})$, less $\rho$. i.e.
\begin{equation}
  f(t,T) = \frac{f_{\tau (t)}(T)}{1-F_{\tau (t)}(T)} - \rho
  .  \label{eq:hazard_rate}
\end{equation}
\end{cor}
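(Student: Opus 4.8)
The plan is to start from the definition $f(t,T) = -\frac{\partial}{\partial T}\log B(t,T)$ and substitute the bond--price representation obtained in the preceding theorem. First I would invoke that result to write $B(t,T) = \e^{\rho(T-t)}\bar{F}_{\tau(t)}(T)$, where $\bar{F}_{\tau(t)} = 1 - F_{\tau(t)}$ is the survival function of $\tau(t)\sim\operatorname{IPH}(\vect{e}_{X(t)}^\prime,\,\mat{M}(x+t)-\mat{\Delta}(\vect{r}(x+t))-\rho\mat{I})$; this is exactly the content of \eqref{eq:bond1} together with the IPH identification made in that theorem.

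Next I would take logarithms, giving $\log B(t,T) = \rho(T-t) + \log\bar{F}_{\tau(t)}(T)$, and differentiate in $T$. The first term contributes $\rho$, and for the second I would use $\frac{\partial}{\partial T}\log\bar{F}_{\tau(t)}(T) = -f_{\tau(t)}(T)/\bar{F}_{\tau(t)}(T)$, with $f_{\tau(t)}$ the IPH density from \eqref{eq:IPH-density}. Negating and rearranging then yields $f(t,T) = f_{\tau(t)}(T)/(1-F_{\tau(t)}(T)) - \rho$, which is \eqref{eq:hazard_rate}; the right--hand side is then recognised as the hazard rate of $\tau(t)$ at $T$, less $\rho$.

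A couple of points deserve a word of justification rather than real work. The differentiation step needs $\bar{F}_{\tau(t)}(T) > 0$, i.e.\ that absorption has not yet happened with certainty by $T$; this holds for every finite $T$ since $\bar{F}_{\tau(t)}(T) = \vect{e}_i^\prime\prod_t^T(\mat{I} + [\mat{M}(u)-\mat{\Delta}(\vect{r}(u))-\rho\mat{I}]\dd u)\vect{e}$ is a strictly positive entry of an invertible product integral. Differentiability in the upper limit is guaranteed by the piecewise continuity of $x\mapsto\mat{M}(x)-\mat{\Delta}(\vect{r}(x))$ assumed throughout. Finally, the statement phrases the hazard rate through a homogeneous $\operatorname{PH}(\vect{e}_i,\mat{M}(x+t)-\mat{\Delta}(\vect{r}(T))-\rho\mat{I})$: this is merely the observation that the hazard rate of an inhomogeneous phase--type law at the single time point $T$ equals that of the homogeneous one whose sub--intensity matrix is frozen at its value at $T$, which is immediate from \eqref{eq:tail-IPH} and \eqref{eq:IPH-density}. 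I do not expect a genuine obstacle here; the only care needed is consistent bookkeeping of the sign and of the time shift $x\mapsto x+t$.
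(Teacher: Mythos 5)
Your proposal is correct and follows essentially the same route as the paper: the paper likewise writes $B(t,T)=\e^{\rho(T-t)}\bar F_{\tau(t)}(T)$ via the preceding theorem, takes logarithms, and differentiates in $T$ to obtain $-\rho+f_{\tau(t)}(T)/(1-F_{\tau(t)}(T))$. Your extra remarks on positivity of the survival function and on reading the statement's $\operatorname{PH}$ notation as the IPH law of the theorem are sensible housekeeping but do not change the argument.
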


}

Another immediate consequence of Theorem \ref{th:main_bond} is the following. 

\begin{cor} Assume that all interest rates are non--negative. Then
conditional on $X(t)=i$, the random variable $\tau (t) \sim \operatorname{IPH}(\vect{e}_i^\prime,\mat{M}(t+x)-\mat{\Delta}(\vect{r}(t+x)))$, $x\geq 0$ then has a c.d.f. given by 
\[   F_{\tau (t)} (T)=1-B(t,T) = \Exp^{\mathbb{Q}}\left( \left.   \int_t^T  r_{X(y)}(y)\e^{-\int_t^y r_{X(u)}(u)\dd u}  \dd y\right| X(t) =i \right)   . \] 
\end{cor}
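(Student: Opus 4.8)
The plan is to derive the identity purely from Theorem~\ref{th:main_bond} together with an elementary integration-by-parts (or fundamental theorem of calculus) argument applied to the discount factor. With $\rho=0$, Theorem~\ref{th:main_bond} already tells us that conditional on $X(t)=i$, $\tau(t)\sim\operatorname{IPH}(\vect{e}_i^\prime,\mat{M}(t+x)-\mat{\Delta}(\vect{r}(t+x)))$ and that $T\mapsto B(t,T)$ is its survival function, so $F_{\tau(t)}(T)=1-B(t,T)$ is immediate; the only real content is the stochastic representation of $1-B(t,T)$ as the stated expectation.

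First I would write, pathwise, $\e^{-\int_t^T r_{X(u)}(u)\dd u}=1-\int_t^T \frac{\partial}{\partial y}\!\left(-\e^{-\int_t^y r_{X(u)}(u)\dd u}\right)\!\dd y$, and observe that $\frac{\partial}{\partial y}\e^{-\int_t^y r_{X(u)}(u)\dd u}=-r_{X(y)}(y)\e^{-\int_t^y r_{X(u)}(u)\dd u}$ for almost every $y$ (the integrand $r_{X(u)}(u)$ being piecewise continuous along a path of $X$, so the map $y\mapsto\int_t^y r_{X(u)}(u)\dd u$ is absolutely continuous). Hence, pathwise,
\[
  1-\e^{-\int_t^T r_{X(u)}(u)\dd u}=\int_t^T r_{X(y)}(y)\,\e^{-\int_t^y r_{X(u)}(u)\dd u}\,\dd y.
\]
Taking $\Exp^{\mathbb{Q}}(\,\cdot\mid X(t)=i)$ on both sides and using \eqref{eq:bond1} on the left, i.e.\ $\Exp^{\mathbb{Q}}(\e^{-\int_t^T r_{X(u)}(u)\dd u}\mid X(t)=i)=B(t,T)$, yields the claimed formula, provided the interchange of expectation and the $\dd y$–integral is justified.

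The main obstacle — modest, but the one point requiring care — is the Fubini/Tonelli justification for that interchange: since $r_{X(y)}(y)\ge 0$ and the integrand is nonnegative, Tonelli applies directly once we know $(y,\omega)\mapsto r_{X(y)}(y)\e^{-\int_t^y r_{X(u)}(u)\dd u}$ is jointly measurable, which follows from measurability of the Markov jump process sample paths and piecewise continuity of the $r_i$. So I would simply invoke Tonelli, noting nonnegativity of the interest rates (guaranteed here by the standing assumption $\rho=0$), and conclude
\[
  F_{\tau(t)}(T)=1-B(t,T)=\Exp^{\mathbb{Q}}\!\left(\left.\int_t^T r_{X(y)}(y)\,\e^{-\int_t^y r_{X(u)}(u)\dd u}\,\dd y\,\right|\,X(t)=i\right).
\]
Alternatively, one can avoid even this by differentiating $1-B(t,T)$ in $T$, identifying $\frac{\partial}{\partial T}B(t,T)=-\Exp^{\mathbb{Q}}(r_{X(T)}(T)\e^{-\int_t^T r_{X(u)}(u)\dd u}\mid X(t)=i)$ from \eqref{eq:bond1} (differentiating under the expectation, again legitimate by dominated convergence since $r$ is nonnegative and the discount factor is bounded by $1$), and then integrating from $t$ to $T$ using $B(t,t)=1$; this is the route I would present if a cleaner write-up is desired.
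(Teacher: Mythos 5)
Your proposal is correct and essentially matches the paper: the ``alternative'' you sketch at the end --- differentiating $B(t,T)$ in $T$ under the expectation to identify $f_{\tau(t)}$ and then integrating from $t$ to $T$ --- is precisely the paper's proof. Your primary pathwise fundamental-theorem-of-calculus argument followed by Tonelli is the same computation with the order of integration and expectation swapped, and both justifications (Tonelli via nonnegativity, or dominated convergence for the derivative) are sound.
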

\begin{proof}
This follows from Theorem \ref{th:main_bond} {\farv{blue} with $\rho=0$} and
\begin{eqnarray*}
f_{\tau (t)}(y)&=& -\frac{\partial}{\partial y}B(t,y)=\Exp^{\mathbb{Q}}\left( \left. r_{X(y)}(y)\e^{-\int_t^y r_{X(u)}(u)\dd u} \right| X(t)=i \right) .
\end{eqnarray*} 
Integrating the expression then yields the result. 
\end{proof}

For the case where $t=0$, the above results are reduced to the following.
\begin{cor}
Assume that all interest rates are non--negative.   
Let  $\tau \sim \operatorname{IPH}(\vect{\pi},\mat{M}(x)-\mat{\Delta}(\vect{r}(x)))$ and let
 $\vect{\pi}=(\pi_1,...,\pi_p)^\prime$ denote the (initial) distribution of $X(0)$. Then
 \begin{eqnarray}
    \mathbb{P} (\tau > T ) &=&   \Exp^{\mathbb{Q}} \left(  \exp \left( -\int_0^T r_{X(u)}(u) \dd u \right) \right)
    \label{cor:tail} \\
    F_{\tau (t)} (T)&=& = \Exp^{\mathbb{Q}}\left(   \int_0^T  r_{X(y)}(y)\e^{-\int_0^y r_{X(u)}(u)\dd u}  \dd y \right) \label{cor:cdf} \\
    f(0,T) &=&\frac{f_{\tau}(T)}{1-F_{\tau }(T)}  . \label{cor:hazard}
\end{eqnarray} 

\end{cor}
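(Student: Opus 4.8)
The plan is to read off all three identities by specialising the bond--price formula \eqref{eq:bond1} to $t=0$ and unconditioning on the initial state $X(0)$; everything else is elementary calculus.

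First I would prove \eqref{cor:tail}. Since all interest rates are non--negative, $\mat{M}(x)-\mat{\Delta}(\vect{r}(x))$ is a sub--intensity matrix, so $\tau\sim\operatorname{IPH}(\vect{\pi}^\prime,\mat{M}(x)-\mat{\Delta}(\vect{r}(x)))$ is well defined, and by \eqref{eq:tail-IPH} together with the product--integral representation \eqref{eq:D-matrix},
\[
  \Prob(\tau>T) = \vect{\pi}^\prime\prod_0^T\!\left(\mat{I}+\left[\mat{M}(u)-\mat{\Delta}(\vect{r}(u))\right]\dd u\right)\!\vect{e} = \vect{\pi}^\prime\mat{D}(0,T)\vect{e}.
\]
On the other hand, conditioning on $X(0)$ and using the tower property under $\mathbb{Q}$ together with \eqref{eq:bond1} at $t=0$,
\[
  \Exp^{\mathbb{Q}}\!\left(\e^{-\int_0^T r_{X(u)}(u)\dd u}\right) = \Exp^{\mathbb{Q}}\!\left(B(0,T)\right) = \sum_{i\in E}\pi_i\,\vect{e}_i^\prime\mat{D}(0,T)\vect{e} = \vect{\pi}^\prime\mat{D}(0,T)\vect{e},
\]
and comparing the two displays gives \eqref{cor:tail}.

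For \eqref{cor:cdf} I would write $F_\tau(T)=1-\Prob(\tau>T)$ and use, for every sample path of $X$, that
\[
  1-\e^{-\int_0^T r_{X(u)}(u)\dd u} = \int_0^T r_{X(y)}(y)\,\e^{-\int_0^y r_{X(u)}(u)\dd u}\,\dd y,
\]
the integrand being $\tfrac{\partial}{\partial y}\big(1-\e^{-\int_0^y r_{X(u)}(u)\dd u}\big)$; taking $\Exp^{\mathbb{Q}}$ and interchanging expectation and integration --- legitimate by Tonelli's theorem since the integrand is non--negative --- then yields \eqref{cor:cdf} in view of \eqref{cor:tail}. For \eqref{cor:hazard}, note that \eqref{cor:tail} identifies $\Exp^{\mathbb{Q}}\!\big(B(0,T)\big)$ with the survival function $1-F_\tau(T)$ of $\tau$, so by the very definitions of the forward rate and of the hazard rate,
\[
  f(0,T) = -\frac{\partial}{\partial T}\log\!\big(1-F_\tau(T)\big) = \frac{f_\tau(T)}{1-F_\tau(T)},
\]
valid wherever $F_\tau(T)<1$; this is also the $t=0$, $\rho=0$ instance of \eqref{eq:hazard_rate}.

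I do not anticipate any genuine obstacle, as the statement is a direct corollary of results already established. The only points deserving a word of care are the unconditioning on $X(0)$ --- handled by the tower property and the fact that $\vect{\pi}$ is the law of $X(0)$ --- the Fubini/Tonelli interchange in \eqref{cor:cdf}, and the tacit assumption in \eqref{cor:hazard} that $T$ lies in the support of $\tau$, i.e. $F_\tau(T)<1$.
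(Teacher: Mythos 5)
Your proposal is correct and follows essentially the same route as the paper: the paper presents this corollary as a direct specialisation to $t=0$ of Theorem \ref{th:main_bond} and the preceding corollary, whose proof likewise identifies $f_{\tau(t)}(y)=-\tfrac{\partial}{\partial y}B(t,y)=\Exp^{\mathbb{Q}}\bigl(r_{X(y)}(y)\e^{-\int_t^y r_{X(u)}(u)\dd u}\mid X(t)=i\bigr)$ and integrates. The only (minor) difference is that for \eqref{cor:cdf} you integrate pathwise and invoke Tonelli rather than differentiating the expectation as the paper does; this is, if anything, slightly more careful, since it sidesteps justifying the interchange of $\partial/\partial y$ with $\Exp^{\mathbb{Q}}$.
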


\begin{rem}\rm
The density $f_\tau (t)$ has the interpretation of being the expected present value of the current interest rate accumulated in a small time interval arround $t$, and $F_\tau (T)$ is the present value of the total accumulated interest rate during $[0,T]$. \demoo
\end{rem}

\begin{example}\rm
Assume that all interest rates are non--negative. 
If $\{ X(t)\}_{t\geq 0}$ is time--homogeneous and $\vect{r}(t)=\vect{r}=(r_1,...,r_p)$, then we also have that
\begin{eqnarray*}
\Exp^{\mathbb{Q}}\left(  \int_0^T  \e^{-\int_0^y r_{X(u)}(u)\dd u}  \dd y\right)&=& \int_0^T \Exp^{\mathbb{Q}}\left(  \e^{-\int_0^y r_{X(u)}\dd u}  \right)\dd y \\
&=& \int_0^T \Prob (\tau >y)\dd y \\
&=& \int_0^T \vect{\pi} \e^{(\mat{M}-\mat{\Delta}(\vect{r}))y}\vect{e} \dd y \\
&=& \vect{\pi}(\mat{M}-\mat{\Delta}(\vect{r}))^{-1}\e^{(\mat{M}-\mat{\Delta}(\vect{r}))T}\vect{e} - \vect{\pi}(\mat{M}-\mat{\Delta}(\vect{r}))^{-1}\vect{e}\\
&=& \mu \left[ 1 - \tilde{\vect{\pi}}\e^{(\mat{M}-\mat{\Delta}(\vect{r}))T}\vect{e}  \right] \\
&=& \mu \Prob (\tilde{\tau}>T) ,
\end{eqnarray*}
where $\mu = \vect{\pi} \left[ -(\mat{M}-\mat{\Delta}(\vect{r}))  \right]^{-1}\vect{e}$ is the expectation of $\tau$, 
\[   \tilde{\vect{\pi}} = \frac{\vect{\pi} \left[ -(\mat{M}-\mat{\Delta}(\vect{r}))  \right]^{-1}}{\vect{\pi} \left[ -(\mat{M}-\mat{\Delta}(\vect{r}))  \right]^{-1}\vect{e}}  \]
is the stationary distribution of a phase--type renewal process with inter--arrivals being $\operatorname{PH}(\vect{\pi}, \mat{M}-\mat{\Delta}(\vect{r}))$, see \cite[Th. 5.3.4]{bladt-nielsen}, and 
 $\tilde{\tau}\sim \operatorname{PH}(\tilde{\vect{\pi}},\mat{M}-\mat{\Delta}(\vect{r})))$. Hence the swap rate $\rho$ can be expressed as
\begin{eqnarray*}
\rho &=& \frac{\Exp^{\mathbb{Q}}\left(  \int_0^T  r_{X(y)}(y)\e^{-\int_0^y r_{X(u)}(u)\dd u}  \dd y\right)}{\Exp^{\mathbb{Q}}\left(  \int_0^T  \e^{-\int_0^y r_{X(u)}(u)\dd u}  \dd y\right)}=\frac{F_\tau(T)}{\mu \Prob (\tilde{\tau}>T) } \\
&=& \frac{1 - \vect{\pi}\e^{(\mat{M}-\mat{\Delta}(\vect{r}))T}\vect{e} }{\vect{\pi}\left[ -(\mat{M}-\mat{\Delta}(\vect{r})) \right]^{-1}\e^{(\mat{M}-\mat{\Delta}(\vect{r}))T}\vect{e} } .
\end{eqnarray*}\demoo
\end{example}

\section{{\farv{blue} Estimation}}\label{sec:estimation}
{\farv{blue} 
Time--homogeneous phase--type distributions or inhomogeneous phase--type distribution where the sub--intensity matrices are on the form 
\[   \mat{T}(x) = \lambda_\theta (x) \mat{T} ,  \]
for some parametric function $\lambda_\theta (x)$, can be estimated in terms of an EM algorithm. 

An observation from a phase--type distribution is hence considered to be the time until a Markov jump process is absorbed, where all transitions and sojourn times in the different states are unobserved. This makes the estimation an incomplete data problem, which an EM algorithm can solve. Essentially the unobserved sufficient statistics (number of jumps between states, total time in then different states) are replaced by their conditional expectations given data and used in the explicit formulas for the maximum likelihood estimators. This updates the parameters, and the procedure is repeated until convergence. Convergence is secured as the likelihood increases in each step. The limit may be a global or only a local maximum.  

Repeated data (absorption times), of course result in the same conditional expectations given their data.
This carries over to weighted data as well, and hence the EM algorithm may efficiently estimate data in histograms. In particular, we may estimate to theoretical distributions by treating their discretised density as a histogram. This provides the link to fitting the intensity matrix of $\{ X(t) \}_{t\geq 0}$ in  \eqref{eq:spot-rate} through bond prices, \eqref{eq:bond} or \eqref{eq:neg-int}, either in terms of observed data or to a theoretical model.

Indeed, consider bond prices $B(0,T_i)$ available at different maturities $T_1,T_2,...,T_n$.\ 
Then according to Theorem \ref{th:main_bond} we have that\marginnote{AE}
  \[  B(0,T_i) = \vect{\pi}\mat{D}(0,T_i)\vect{e} = \e^{\rho T_i}\mathbb{P}(\tau > T_i), \ \ i=1,2,...,n , \]
for some $\rho >0$ and where $\tau\sim\mbox{IPH}(\vect{\pi},\mat{M}(u) -\mat{\Delta}(\vect{r}(u))-\rho\mat{I} )$. Then $\rho$ must satisfy that 
\[   \e^{-\rho T_i} B(0,T_i) \leq 1, \ \ \ i=1,2,...,n.  \]
This can be achieved by choosing
\[    \rho = \max_{i\in \{1,\ldots,n\}} \left(  \frac{\log B(0,T_i)}{T_i}  \right) . \]
In the life--insurance context in Denmark, by regulation the bond prices (discounting factors) must be
computed from discrete forward rates, $f_d(0,T_i)$, published by the Danish Financial Supervisory Authority. Thus
\[   B(0,T_i) = \left( 1 + f_d(0,T_i)  \right)^{-T_i}    \]  
from which
\[    \frac{\log B(0,T_i)}{T_i}  = -\log (1+f_d(0,T_i))   . \]
Hence
\begin{align}\label{eq:rho_formula}
 \rho = \max_i \left(-\log (1+f_d(0,T_i))  \right) = -\min_i \log (1+f_d(0,T_i)).
\end{align}

Hence calibrating to data $B(0,T_i)$, $i=1,...,n$ can be done by fitting PH or IPH distributions to 
$\e^{-\rho T_i}B(0,T_i)$ using an EM algorithm. The possible interest rates can either be picked by the EM algorithm (referred to as unrestricted interest rates), or we can fix the possible rates to values (or functions) of our choice (restricted interest rates). 

In the former case, 
we obtain a maximum likelihood estimate $(\hat{\vect{\pi}},\hat{\mat{T}}(x))$ for the parameters. The estimate for $\mat{M}(x)$ is then readily obtained from 
\[  \hat{\mat{M}}(x) =  \hat{\mat{T}}(x) +  \mat{\Delta}(\vect{t}(x)) . \]
To find the induced interest rates, we also have from Theorem \ref{th:main_bond} that 
\[  \hat{\mat{T}}(x) =  \hat{\mat{M}}(x) - \mat{\Delta}(\vect{r}(x)) - \rho \mat{I}    \]
so we conclude that the estimated exit rates $\vect{t}(x)$ must satisfy
\[  \vect{t}(x) = \vect{r}(x) + \rho \vect{e},  \]
where $\vect{e}$ is the vector of ones. Hence the induced interest rates are given by
\[   \vect{r}(x) = \vect{t}(x) -  \rho \vect{e}  . \]
Neither the transition rates nor the interest rates are unique, but the resulting discount factor (bond price) is invariant under different representations, which is all that matters regarding reserving in the insurance context.

If, in turn, we decide to choose the possible range of interest rates $r_i(x)$ ourselves, then the EM--algorithm is modified not to update the exit rates.} This modification is easily dealt with by simply removing updates of the latter in the original EM algorithm of \cite{AsmussenEM} or \cite{Albrecher-Bladt-Yslas-2020}.\ See Appendix \ref{sec:app-EM} for details. {\farv{blue}
In this case, the exit rates will be fixed at
\[    \vect{t}(x) = \vect{r}(x) + \rho \vect{e}    \]
so
\[  \hat{\mat{M}}(x) =  \hat{\mat{T}}(x) + \mat{\Delta}(\vect{r}(x))  + \rho \mat{I}  .   \]
While the parametrisation of the transition rates may not be unique, the interest rates remain fixed. 
}

{\farv{blue} We now present two examples of fitting to real data and one example to a theoretical model. The estimation is computed using the R--package {\it matrixdist}. }

\begin{example}[{\farv{blue}Fitting to observed bond prices with restricted interest rates}]\label{ex:fitting_bond_prices}\rm
 Bond prices, $B(0,T)$ as of 31/12/2003 (time zero) with maturities $T=1,2,...,30$ years are available from the Danish Financial Supervisory Authority and given by 0.9755051, 0.9434934, 0.9059545, 0.8679149, 0.8251354, 0.7857250, 0.7472528,
 0.7075066, 0.6679984, 0.6286035, 0.5951316, 0.5625969, 0.5310441, 0.5005108,
0.4710280, 0.4448469, 0.4197550, 0.3958013, 0.3728296, 0.3508858, 0.3319907,
0.3140894, 0.2970098, 0.2808430, 0.2654229, 0.2508400, 0.2369349, 0.2237965,
0.2112725, 0.1994495, respectively. 

This corresponds to an empirical survival distribution to which we can then fit phase--type distributions of different dimensions. Regarding the discretisation, we let $0.5+i$, $i=0,...,29$ denote the data points with probability mass $B(i)-B(i+1)$, where $B(0)=1$, and a right censored data point at 30 with probability mass $B(30)=0.1994495$. {\farv{blue} Since all observed bond prices are less than one, we have $\rho = 0$, corresponding to an environment with non-negative interest rates.}  

We used $p=2,3,4,5,10$ and $15$ phases, with state--wise interest rates being $r_i^p=i/(10p)$, $i=1,...,p$ for the different dimensions $p$. Underlying this choice is the assumption that the interest rates fluctuate between $1\%$ and $10\%$, and the $r_i$'s are obtained as the points that divide the interval $[0,0.1]$ into $p$, including the right endpoint. The vectors $\vect{r}^p=(r_1^p,...,r_p^p)^\prime$ will serve as exit rate vectors of the phase--type distributions to be fitted. 

In Figure \ref{Fig:ZCB} (left), we have plotted the phase--type fits to the empirical survival curve for dimensions $p=2,3,4,5$. At dimension 3, we obtain a decent fit and excellent fits for dimensions 4 and 5. The likelihood values for 4 cases are -3.178171, -3.16838,-3.166633, and -3.166182. Further experimentation with dimensions 10 and 15 resulted in likelihoods of -3.165002
and -3.164654, respectively. However, the plots of bond prices and yields are indistinguishable from the plots corresponding to dimension 5, see Figure \ref{Fig:ZCB2}. We can also assess the quality of the fits by plotting the estimated density function vs. the weighted data, shown in Figure \ref{Fig:ZCB3}. Again the plots for dimensions $p=5,10,15$ are almost indistinguishable. 
Therefore, we conclude that dimension 4 or 5 will suffice to {\it approximate} the bond prices.

The estimates of the sub--intensity matrix $\mat{M}-\mat{\Delta}(\vect{r})$ (under a risk neutral measure $\mathbb{Q}$) for dimensions $p=3,4,5$ are given by 
% \[  \begin{pmatrix} -0.06 & 0.01 \\ 0 & -0.1 \end{pmatrix} , \]
\[\vspace*{-0.1cm}  \small\arraycolsep=2.5pt\def\arraystretch{1.2}
\left(\begin{array}{ccc} -0.13 & 0.1 & 0 \\ 
0 & -0.41 & 0.34 \\ 
0.14 & 0 & -0.24 \\
\end{array}\right),  \arraycolsep=3.0pt\def\arraystretch{1.2}\left( \begin{array}{cccc} -0.25 & 0.22 & 0.01 & 0 \\ 0.14 & -1.11 & 0.75 & 0.18 \\ 0.06 & 0.29 & -0.63 & 0.2 \\ 0.09 & 0.22 & 0.65 & -1.05 \end{array}\right),  \left(\begin{array}{ccccc} -0.26 & 0.02 & 0.06 & 0.07 & 0.08 \\0.07 & -1.68 & 0.69 & 0.23 & 0.65 \\ 0.19 & 0.32 & -1.99 & 0.93 & 0.48 \\ 0.04 & 0.35 & 0.27 & -1.2 & 0.46 \\ 0.07 & 0.82 & 0.07 & 0.8 & -1.85 \end{array} \right) .\] 
To fit the bond prices, the initial distributions of Markov processes were all on the form $(1,0,...,0)$ of appropriate dimension, i.e., initiation in state 1.
\begin{figure}
  \centering
  \includegraphics[scale=0.40]{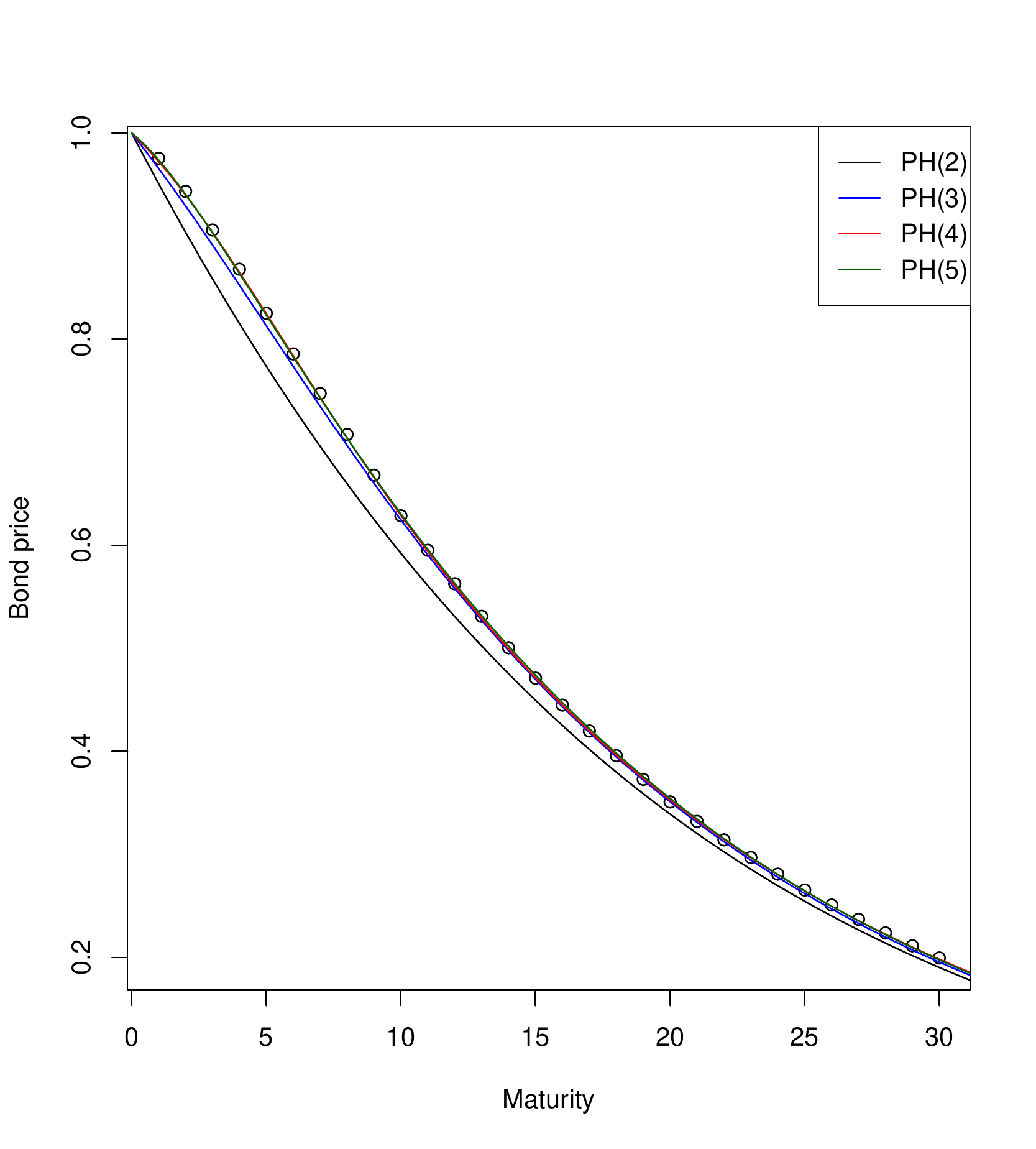}
   \includegraphics[scale=0.40]{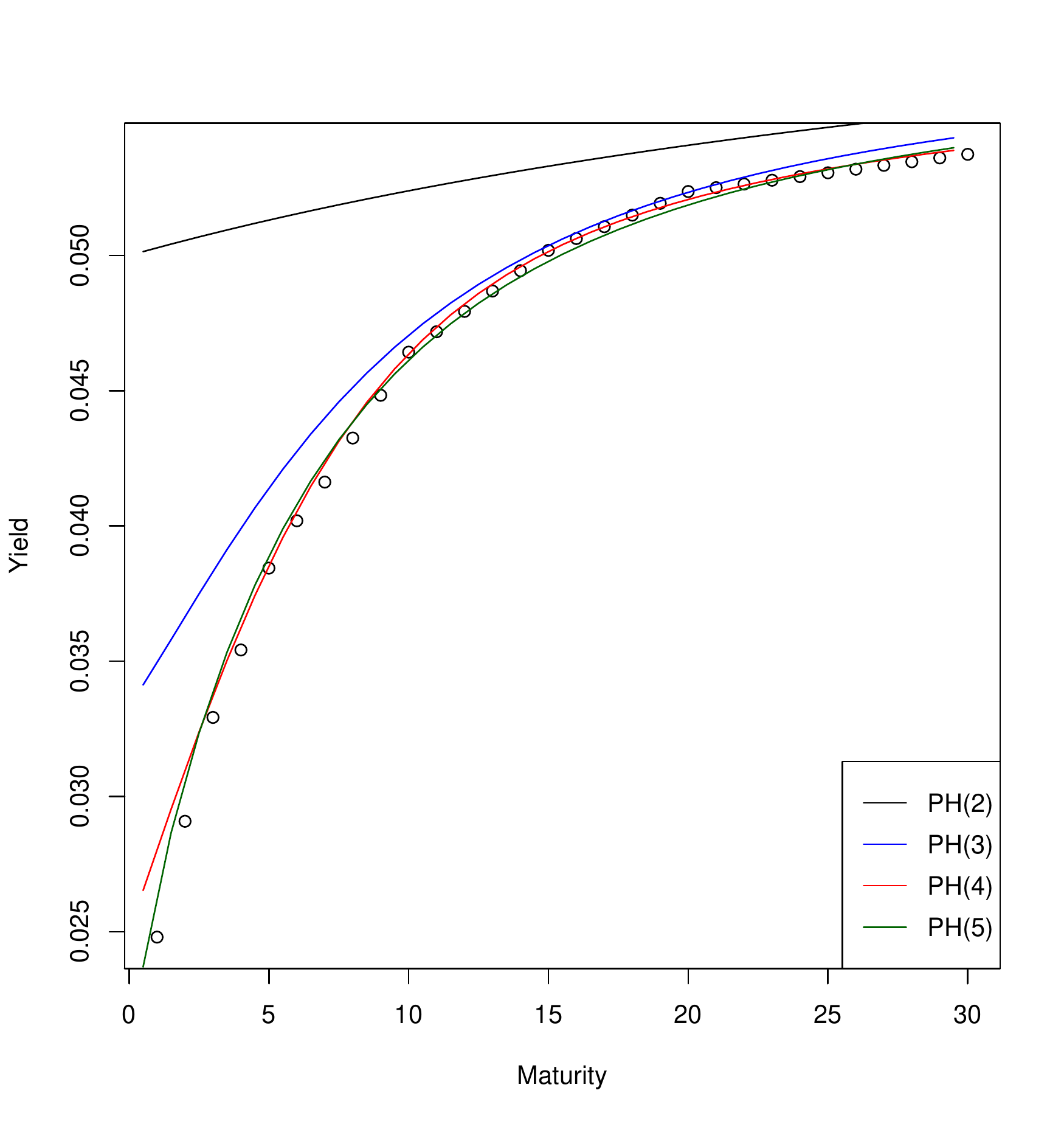}
  \caption{Phase--type fits to Zero-coupon bond prices (left) and corresponding yield curves (right) for dimension $p=2,3,4,5$.}
  \label{Fig:ZCB}
\end{figure}
\begin{figure}
  \centering
  \includegraphics[scale=0.40]{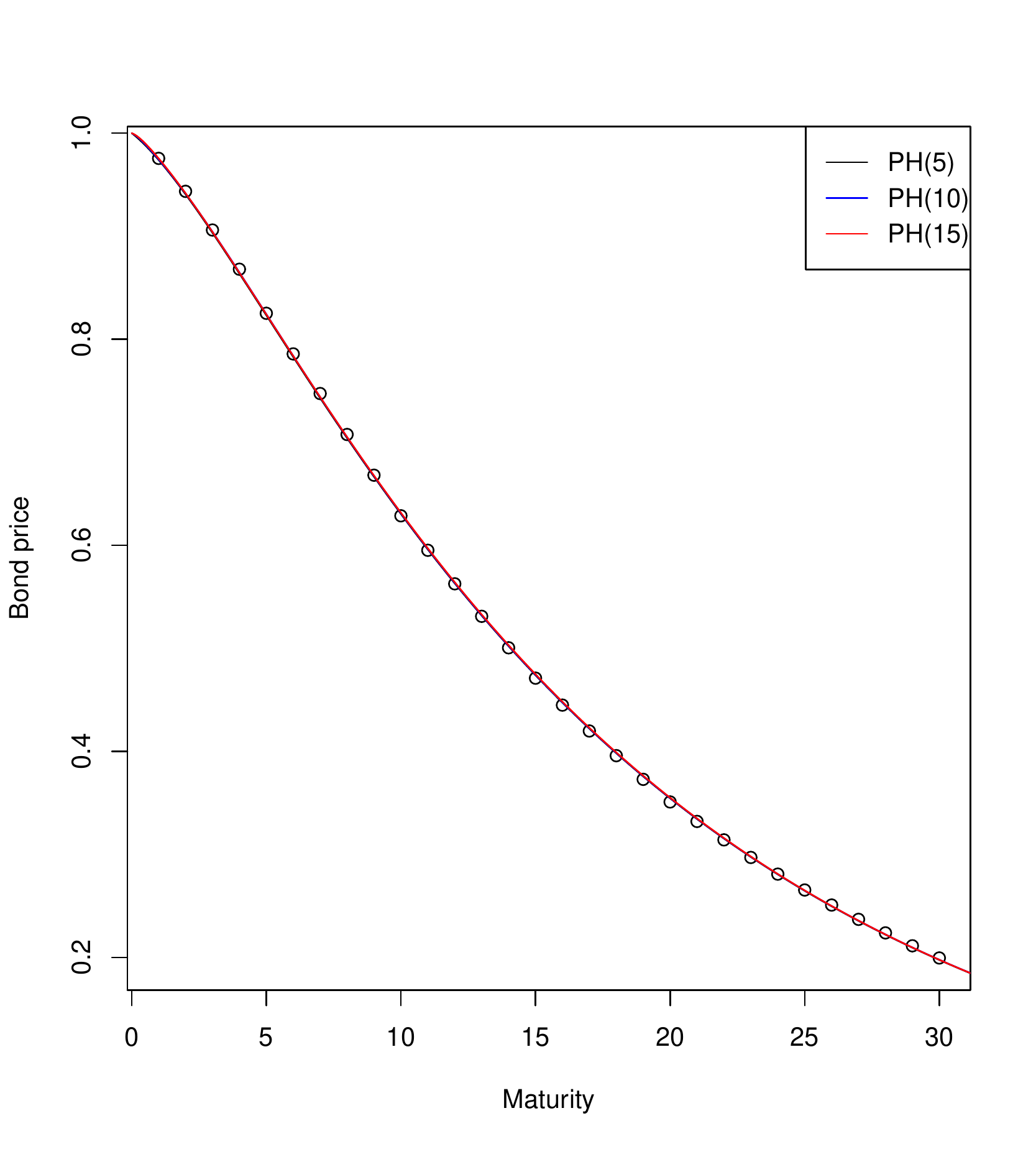}
   \includegraphics[scale=0.40]{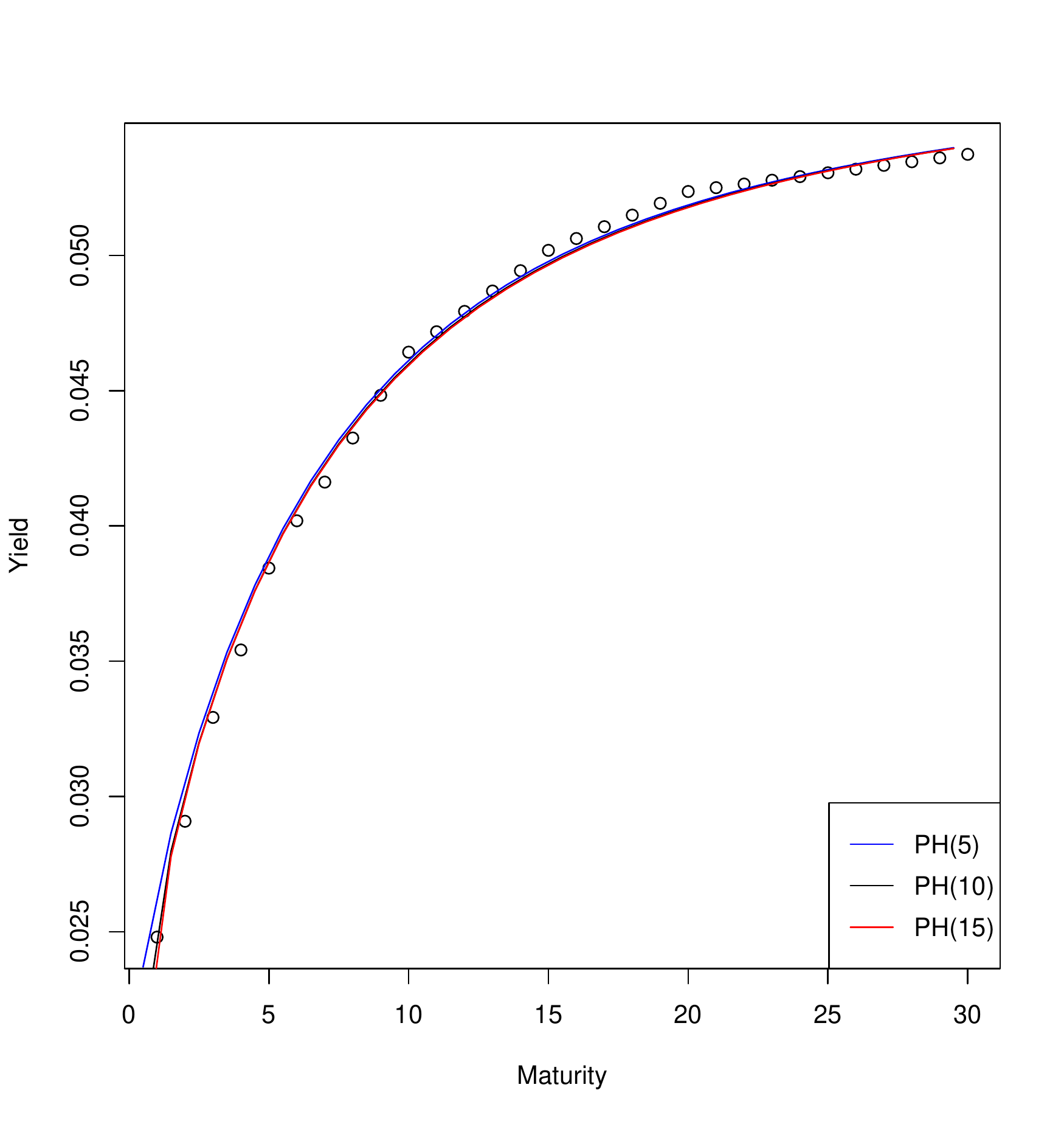}
  \caption{Phase--type fits to Zero-coupon bond prices (left) and corresponding yield curves (right) for dimension $p=5,10,15$.}
  \label{Fig:ZCB2}
\end{figure}
\begin{figure}
  \centering
  \includegraphics[scale=0.40]{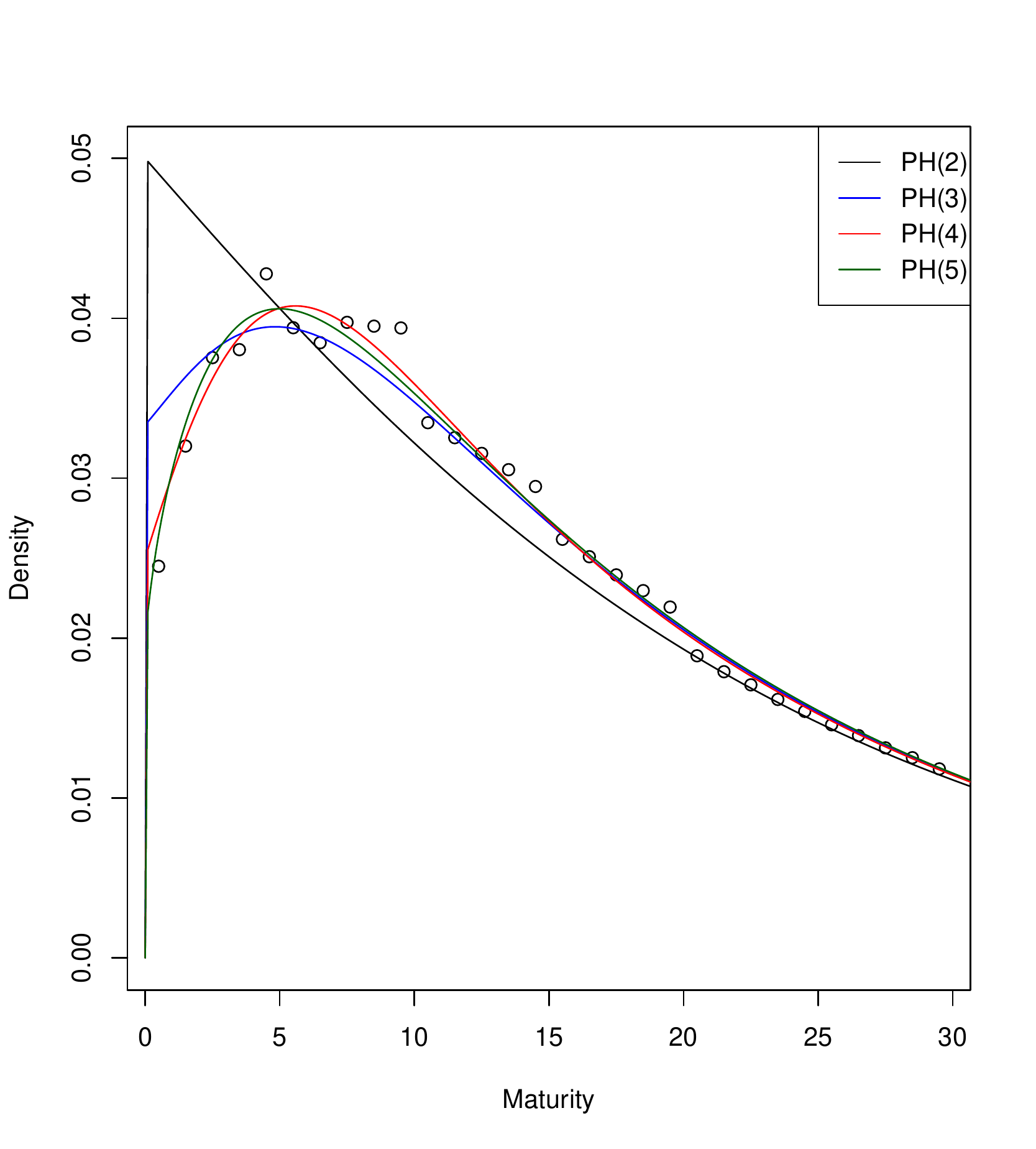}
   \includegraphics[scale=0.40]{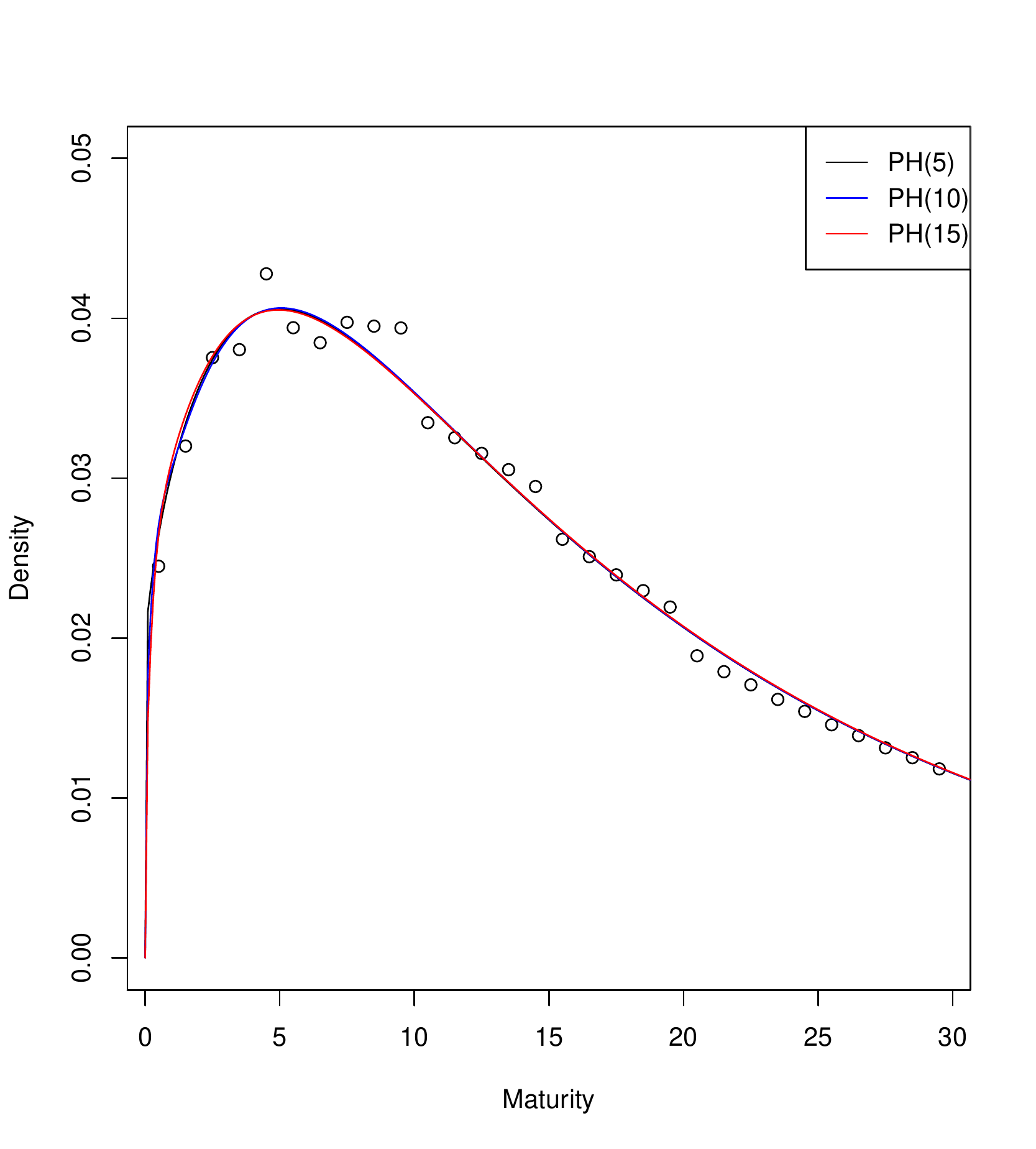}
  \caption{Fitted phase--type densities vs. weighted data for $p=2,3,4,5$ (left) and $p=5,10,15$ (right).}
  \label{Fig:ZCB3}
\end{figure}
\demo
\end{example}

{\farv{blue}
\begin{example}[Fitting to 2019 bond prices with unrestricted interest rates]\rm 
To illustrate the applicability of our methods also in the case of a negative interest rate environment, \marginnote{AE/Ref \#1} we can instead fit to bond prices as of 31/12/2019 from the Danish Financial Supervisory Authority;\ this dataset consists of maturities of $T=1,2,...,120$ years. In this case, we let the EM algorithm choose the necessary positive and negative interest rates. 

The first five years have bond prices above one and given by 1.00231736, 1.00403337, 1.00445679, 1.00382807, and 1.00197787, which reflects the (slightly) negative interest rate environment at the time. From \eqref{eq:rho_formula}, we get $\rho = 0.002314677$ as the exponential factor to down-scale prices to below one. 
\begin{figure}
  \centering
  \includegraphics[scale=0.5]{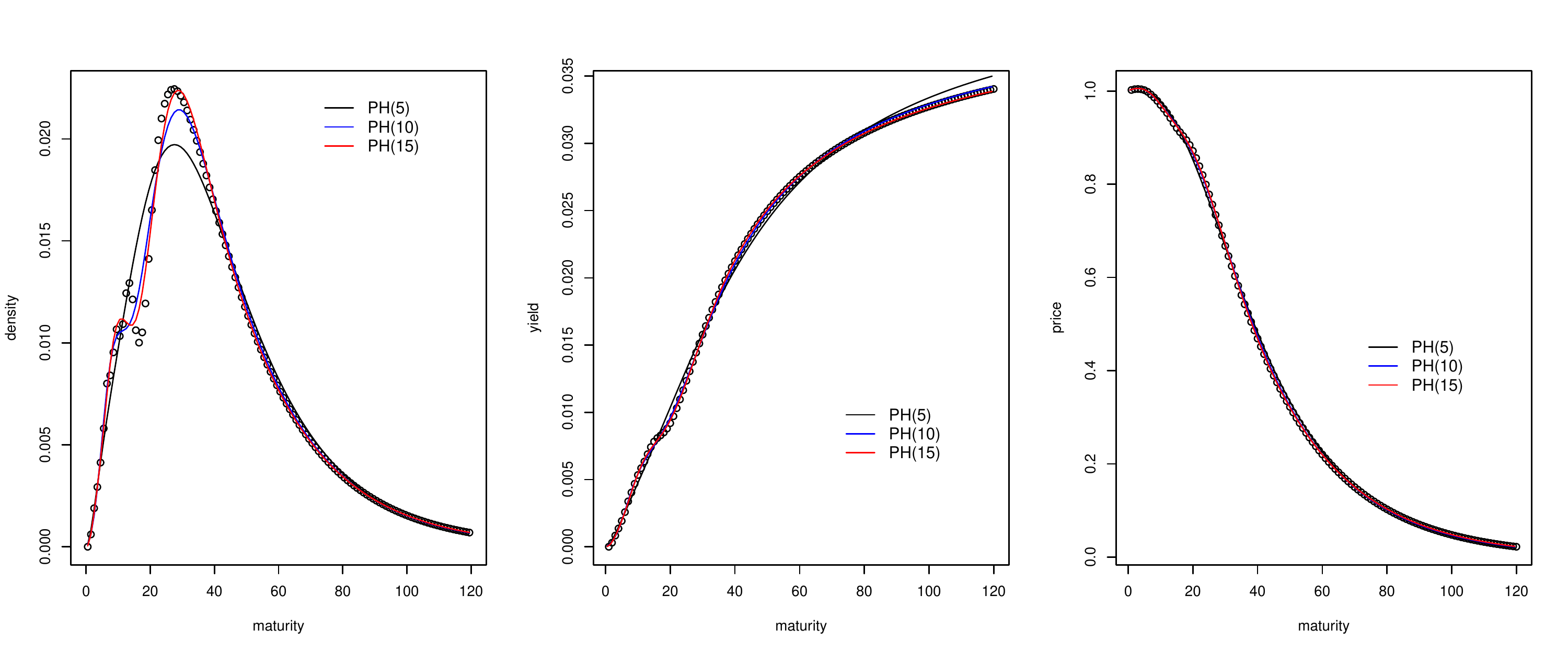}
  \caption{Fitted phase-type densities (left), corresponding yield curves (middle) and bond prices (right) for dimensions $p=5,\!10,\!15$ based on bond price data as of 31/12/2019.} 
  \label{Fig:2019-fit}
\end{figure}
In Figure \ref{Fig:2019-fit}, we show the phase--type fits to the bond prices. We have used the subclass of time--homogeneous Coxian distributions, where initiation is always in state 1, and the only possible transitions are from a state, $i$ say, to the following, $i+1$,  or to exit to the absorbing state. 

If the primary purpose is using the fits as a discounting factor in a life--insurance model, then probably all fits could be used (right plot). If the yield curve fitting is the concern, then only dimensions 10 and 15 seem to catch the appropriate curvature. Regarding the probability density of the phase--type, the 15-dimensional fit is the best. 

To exemplify, we consider the ten dimensional fit. The fitted intensity matrix, $\hat{\mat{M}}$, for $\{ X(u)\}_{u\geq 0}$, is given by
\[  \footnotesize
\left( 
\begin{array}{rrrrrrrrrr}
-0.5212 & 0.5212 & 0.0000 & 0.0000 & 0.0000 & 0.0000 & 0.0000 & 0.0000 & 0.0000 & 0.0000 \\ 
 0.0000 & -0.5212 & 0.5212 & 0.0000 & 0.0000 & 0.0000 & 0.0000 & 0.0000 & 0.0000 & 0.0000 \\ 
 0.0000 & 0.0000 & -0.5185 & 0.5185 & 0.0000 & 0.0000 & 0.0000 & 0.0000 & 0.0000 & 0.0000 \\ 
 0.0000 & 0.0000 & 0.0000 & -0.5161 & 0.5161 & 0.0000 & 0.0000 & 0.0000 & 0.0000 & 0.0000 \\ 
 0.0000 & 0.0000 & 0.0000 & 0.0000 & -0.5152 & 0.5152 & 0.0000 & 0.0000 & 0.0000 & 0.0000 \\ 
 0.0000 & 0.0000 & 0.0000 & 0.0000 & 0.0000 & -0.4664 & 0.4664 & 0.0000 & 0.0000 & 0.0000 \\ 
 0.0000 & 0.0000 & 0.0000 & 0.0000 & 0.0000 & 0.0000 & -0.3099 & 0.3099 & 0.0000 & 0.0000 \\ 
 0.0000 & 0.0000 & 0.0000 & 0.0000 & 0.0000 & 0.0000 & 0.0000 & -0.3099 & 0.3099 & 0.0000 \\ 
 0.0000 & 0.0000 & 0.0000 & 0.0000 & 0.0000 & 0.0000 & 0.0000 & 0.0000 & -0.3099 & 0.3099 \\ 
 0.0000 & 0.0000 & 0.0000 & 0.0000 & 0.0000 & 0.0000 & 0.0000 & 0.0000 & 0.0000 & 0.0000 \\ 
\end{array}
\right)
 \]
The matrix contains six different parameter values. \marginnote{Ref \#1}  The matrix structure is carried over from the
phase--type fit to the (discounted) bond prices. The blocks with the same parameters correspond to Erlang 
blocks, i.e. convolution of exponential distributions with the same parameter.

 The induced (estimated) interest rates (in $\%$) are, respectively,
\[ -\rho, -\rho,   0.03468739,  0.28218594, -\rho,  4.64627655,-\rho
 , -\rho, -\rho,  3.86252219 .   \]
 These should also be counted as parameters. \demo
\end{example}

\begin{example}[Fitting to a two--factor Vasicek model]\rm
In this example we consider the two--factor Vasicek short rate \marginnote{AE} model G2++ (see \cite{Brigo}) with an initial negative interest rate.

Here the bond prices as of time zero are given by
\[   B(0,T) = \exp \left\{-\psi (T)+\frac{1}{2} V^{2}(0, T)\right\},  \]
where 
\[
\begin{aligned} V^{2}(0, T)=& \sum_{i=1}^2 \frac{\sigma_{i}^{2}}{k_{i}^{2}}\left(T-t-B_{k_{i}}(0, T)-\frac{k_{i}}{2} B_{k_{i}}^{2}(0, T)\right) \\
 %&+\frac{\sigma_{2}^{2}}{k_{2}^{2}}\left(T-t-B_{k_{2}}(0, T)-\frac{k_{2}}{2} B_{k_{2}}^{2}(0, T)\right) \\ 
 &+\frac{2 \sigma_{1} \sigma_{2} \sigma_{12}}{k_{1} k_{2}}\left(T-t-B_{k_{1}}(0, T)-B_{k_{2}}(0, T)+B_{k_{1}+k_{2}}(0, T)\right), \end{aligned} \]
\[ B_{k}(0, T)=\frac{1-\e^{-k(T-t)}}{k} \ \ 
\mbox{and}
\ \ \  \psi (T) = \frac{(\theta-r_0)(1+\e^{-k_1 T}) + k_1\theta T}{k_1}.  \]
We chose the same parameters as in \cite{korn}, Fig. 3, apart from the initial interest rate $r_0$, which was set to $-1\%$. 
Hence the parameters are
\[ r_0 = -0.01, k_1 = 0.401, k_2 = 0.178, \sigma_1 = 0.0378, \sigma_2 = 0.0372, \theta = 0.01297, \sigma_{12} = -0.996 \]

\begin{figure}[H]
  \centering
  \includegraphics[scale=0.5]{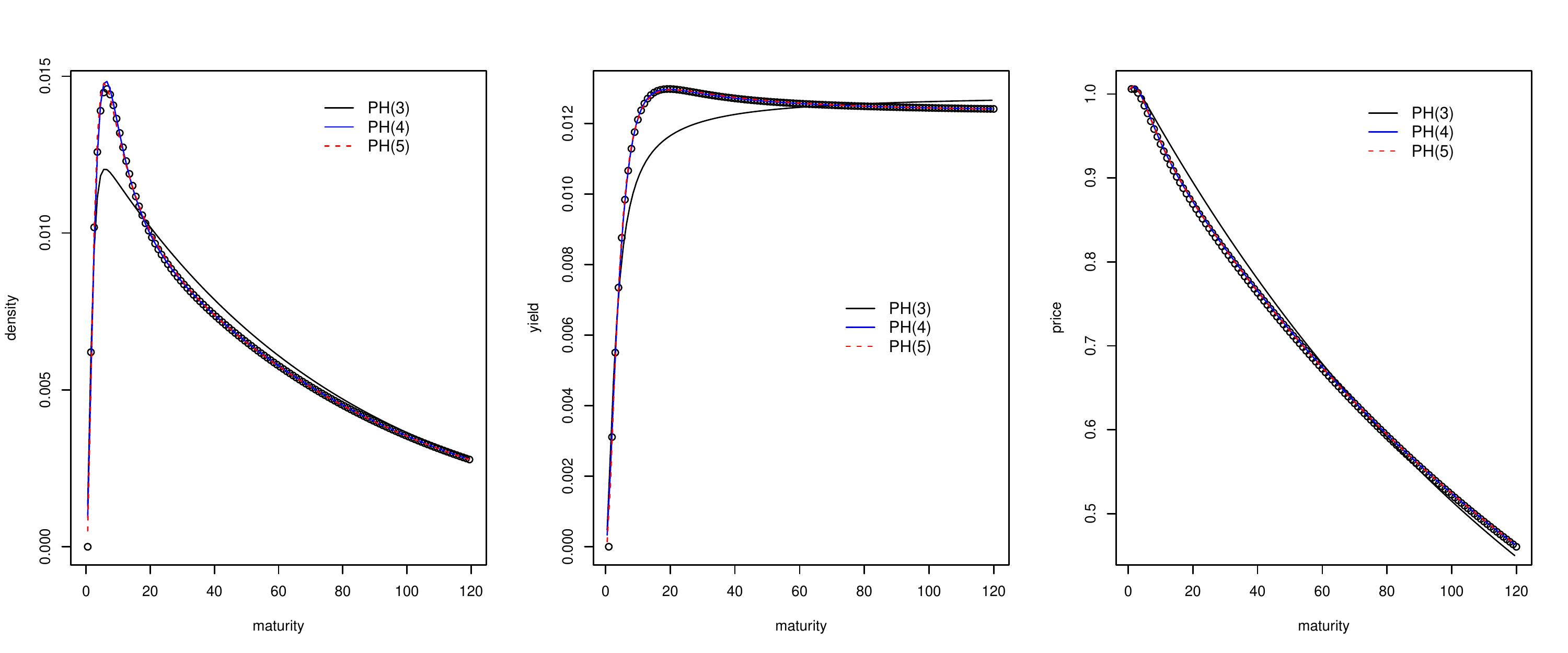}
  \caption{Fitted phase-type densities (left) and corresponding yield curves (middle) and bond prices (right) for dimensions $p=3,\!4,\!5$ based on bond prices from the two-factor Vasicek G2++ model.} 
  \label{Fig:2019-fit}
\end{figure}

We fitted 3,4 and 5 dimensional time--homogeneous phase--type distributions with a Coxian structure to the discounted bond prices $\e^{-\rho T}B(0,T)$.
Here $\rho =  0.005955398$ and the intensity matrix for $\mat{M}$ based on 4 phases is given by
\[    \hat{\mat{M}} = \left( \begin{array}{rrrr}
   -0.17 & 0.17 & 0.00 & 0.00 \\ 
 0.00 & -0.66 & 0.66 & 0.00 \\ 
 0.00 & 0.00 & -0.61 & 0.61 \\ 
 0.00 & 0.00 & 0.00 & 0.00 \\ 
\end{array}\right) \]
with corresponding interest rates $ -\rho,-\rho, 0.078298752  0.006307674$,
while for
5 phases, we get
\[ \hat{\mat{M}} = \left(  
\begin{array}{rrrrr}
-0.65 & 0.65 & 0.00 & 0.00 & 0.00 \\ 
 0.00 & -1.79 & 1.79 & 0.00 & 0.00 \\ 
 0.00 & 0.00 & -1.89 & 1.89 & 0.00 \\ 
 0.00 & 0.00 & 0.00 & -0.12 & 0.12 \\ 
 0.00 & 0.00 & 0.00 & 0.00 & 0.00 \\ 
\end{array}
\right).
  \]
  The corresponding \marginnote{Ref \# 1}(estimated) interest rates are $-\rho,-\rho,-\rho, 0.012793967, 0.006280658$. 
A total of six parameters specify the four-dimensional model, while seven parameters determine the five-dimensional.  \demo
\end{example}

}

\section{Applications to life insurance}\label{sec:life}
In this section, we incorporate the stochastic interest rate model of the previous sections to life insurance valuations. We consider the model introduced by \cite{NorbergHigherOrder, NorbergStokInt} and {\farv{blue} extend their results\marginnote{AE/Ref \# 1} on reserves and higher order moments to so-called partial reserves and higher order moments, that is, corresponding results on events of the terminal state. Partial reserves and moments play important roles when
dealing with so-called retrospective reserves in single states (cf. [5, Section 5.E]), which, however is outside the scope of the present paper. We provide this extension following the matrix approach of \cite{Bladt2020} so that these types of results are extended to allow for stochastic interest rates on the form \eqref{eq:spot-rate}}.\ The extensions of the results of these papers are pointed out in a series of remarks throughout the section.   

\subsection{A Life insurance model with stochastic interest rates}\mbox{}

Let $X=\{ X(t) \}_{t\geq 0}$ be a time--inhomogeneous Markov jump process with a finite state--space $E$ and intensity matrix $\mat{\Lambda}(t)=\{  \lambda_{ij}(t) \}_{i,j\in E}$. Then we define a payment process $\{ B(t) \}_{t\geq 0}$ by 
\begin{align}
\dd B(t) = \sum_{i\in E} 1\{ X(t-)=i\} \bigg(b_{i}(t)\dd t + \sum_{j\in E} b_{ij}(t)\dd N_{ij}(t)\bigg), \label{eq:betalingsfunction}
\end{align}
where $b_i(t)$ are continuous payment rates (negative if premiums) and $b_{ij}(t)$  lump sum payments, which occur according to the counting measure $N_{ij}(t)$. The intensity matrix is decomposed into
\begin{align}
\mat{\Lambda}(t) = \mat{\Lambda}^0(t) + \mat{\Lambda}^1(t), \label{eq:C-D-mat}
\end{align}
where $\mat{\Lambda}^1(t)$ is a non--negative matrix and, consequently, $\mat{\Lambda}^0(t)$ a sub--intensity matrix, i.e. row sums are non--positive.  The counting process is linked to the transitions of $X$ in the following way. Upon transition from $i$ to $j$, $i\neq j$, in $X$ at time $t$, a lump sum payment of $b_{ij}(t)$ will be triggered with probability 
 \begin{equation}
  \frac{\lambda^1_{ij}(t)}{\lambda^0_{ij}(t)+\lambda^1_{ij}(t)} . \label{eq:lump_sum_probability}
\end{equation}
If $i=j$, then $N_{ii}(t)$ denotes an inhomogeneous Poisson process with intensity $\lambda_{ii}(t)$, and a lump sum during a sojourn in state $i$ will then be triggered in $[t,t+\dd t)$ with probability $\lambda^1_{ii}(t)\dd t$.

Finally, we assume that the spot interest rates in state $i$ follow a deterministic function $r_i(t)$. Hence the interest rates follow the model \eqref{eq:spot-rate}. 

\begin{rem}\rm
The classic Markov chain life insurance setting of, e.g., \cite{hoem69, norberg1991}, is the recovered if $r_i(t)\equiv r(t)$, $b_{ii}(t)=0$ and if the probabilities \eqref{eq:lump_sum_probability} are either zero or one. Extending the classic setting to allow for different interest rates in the different states was considered in  \cite{NorbergHigherOrder, NorbergStokInt}, where Thiele type of differential equations for the reserves and higher order moments were derived. \demoo
\end{rem}

For the purpose of computing reserves and higher order moments, \cite[(3.8)--(3.11)]{Bladt2020}, we let $\vect{b}(t)=(b_i(t))_{i\in E}$ denote the vector containing the continuous rates, and define matrices
\begin{align*}
\mat{B}(t) = \left\{b_{ij}(t)\right\}_{i,j\in E}, \ 
\mat{R}(t) = \mat{\Lambda}^1(t)\bullet \mat{B}(t) + \mat{\Delta}(\mat{b}(t)), \
\mat{C}^{(k)}(t) = \mat{\Lambda}^1(t)\bullet \mat{B}^{\bullet k}(t), \quad k\geq 2, 
\end{align*}
where $\mat{\Delta}(\mat{b}(t))$ denotes the diagonal matrix with $\vect{b}(t)$ as diagonal. The operator $\bullet$ denotes Schur (entrywise) matrix product, defined by $\mat{A}\bullet \mat{B} = \{  a_{ij}b_{ij} \}$ for matrices $\mat{A}=\{a_{ij}\}$ and $\mat{B}=\{ b_{ij} \}$.

Hence $\mat{B}(t)$ is the matrix containing the lump payments at transitions and at Poisson arrivals during sojourns, $\mat{R}(t)$ is the matrix whose $ij$'th element is the expected reward accumulated during $[t,t+\dd t)$ upon transition from $i$ to $j$, or during a sojourn in state $i$ if $i=j$. The $\mat{C}^{(k)}(t)$ matrix is more technical to be used when dealing with higher order moments. 

Finally, we let
\begin{align*}
\mat{r}(t) = (r_{i}(t))_{i\in E}.
\end{align*} 
denote the vector of interest rates. 

Now assume that the interest rate process is modelled and fitted using bond prices like in Section \ref{sec:discount}. Accordingly there is a Markov jump process $X_r=\{ X_r(t)\}_{t\geq 0}$ with state--space $E_r=\{1,2,...,p\}$ and intensity matrix 
$\mat{\Lambda}_r(t)=\{ \lambda_{ij}^r(t)\}_{t\geq 0}$, say, such that the corresponding bond prices $B(t,T)$ are given as in Theorem \ref{th:main_bond}. Similarly, we let $X_b=\{ X_b(t)\}_{t\geq 0}$ denote the 
Markov jump process governing the transition between the biometric states with the state--space $E_b=\{1,2,...,q\}$ and intensity matrix $\mat{\Lambda}_b(t)=\{ \lambda_{ij}^b(t)\}_{t\geq 0}$. Hence the Markov jump process appearing in \eqref{eq:betalingsfunction} can be written on the form \begin{align}\label{eq:Xb,Xr}
X(t)=(X_b(t), X_r(t))  
\end{align}
with state--space $E=E_b\times E_r$. 

Hence we need to decide upon an ordering of $E$, which will be lexicographical. This means the for elements $(i,\tilde{i}),(j,\tilde{j})\in E$, 
 \[ (i,\tilde{i})<(j,\tilde{j}) \iff (i-1)p+\tilde{i}<(j-1)p+\tilde{j} .\]
  In other words,  each biometric state $i$ consists of sub--states $(i,1),...,(i,q)$ depending on the state of the underlying Markov process $X_r$, see Figure \ref{Fig:X(t)}. 

\begin{figure}
  \centering
   \includegraphics[scale=0.5]{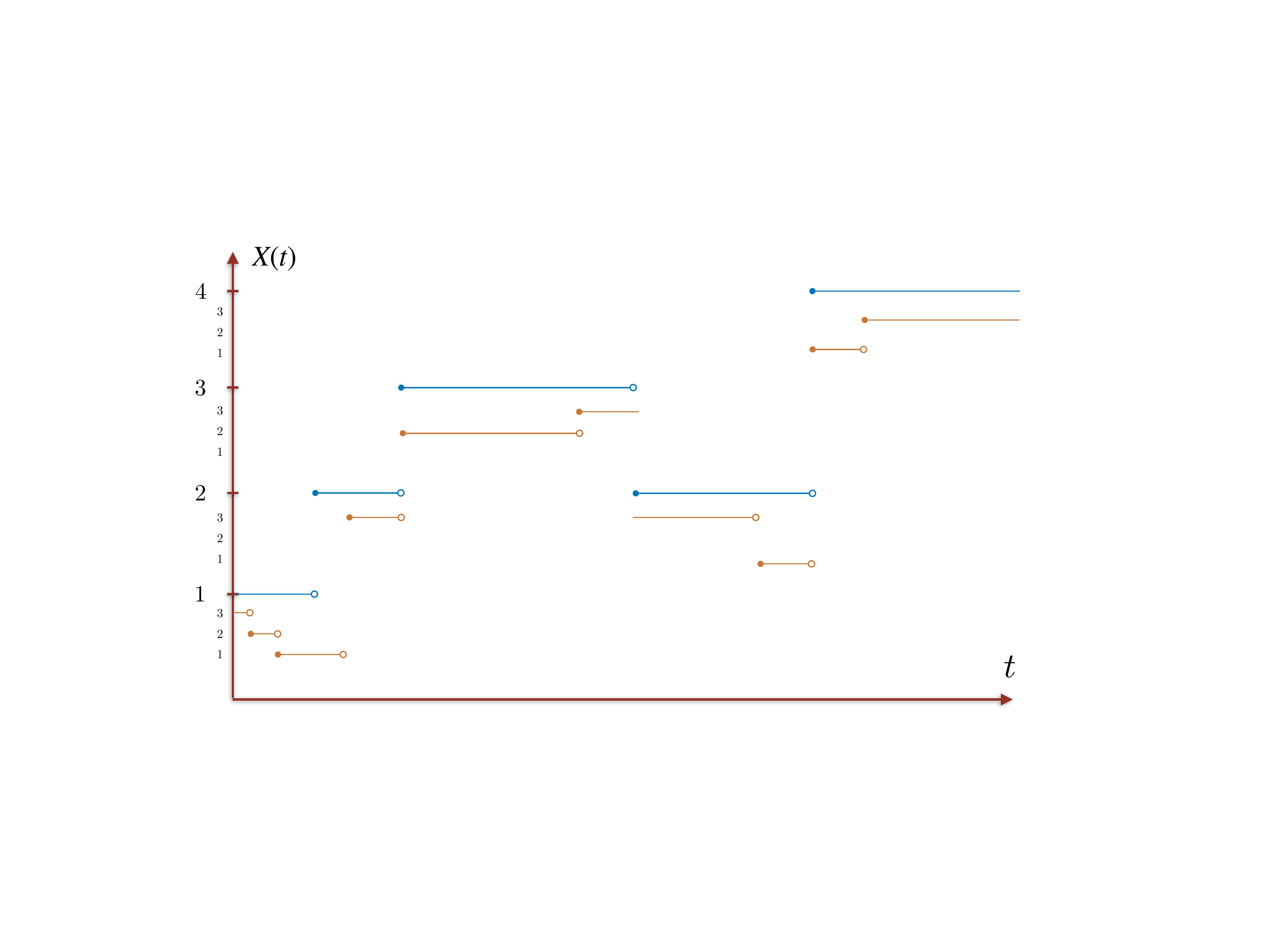}
  \vspace{-4cm}\caption{Lexicographical ordering: for each biometric state (blue), several sub-states (orange) define the underlying interest rate level. }
  \label{Fig:X(t)}
\end{figure}
The processes $X_b$ and $X_r$ may or may not be independent, and the payment processes \eqref{eq:betalingsfunction} likewise may or may not be independent of $X_r$.\ {\farv{blue} In the independent case the processes $X_b$ and $X_r$ are defined on each their state--space, and the common state--space  \marginnote{AE/Ref $\# 2$}will be the product set of the two. If the processes are sharing states, with the possibility of having simultaneous jumps, then we obtain dependency of the processes. Such a case could, e.g. be a rise in the interest rate causing an increased intensity of jumping to surrender or free--policy states (see, e.g., \cite{Buchardt2014}). }

In the following example, we consider the simplifications in the representations when assuming independence.

\begin{example}[Independence]\rm \label{ex:life_1}
If the transition rates of $X$ satisfy, for all  $i,j\in E_b$, $j\neq i$, and $\tilde{i},\tilde{j}\in E_r$, $\tilde{j}\neq \tilde{i}$, 
\begin{align*}
\lambda_{(i,\tilde{i}),(j,\tilde{i})}(t) &= \lambda_{(i,\tilde{j}),(j,\tilde{j})}(t) = \lambda^b_{ij}(t),\\[0.2 cm]
\lambda_{(i,\tilde{i}),(i,\tilde{j})}(t) &= \lambda_{(j,\tilde{i}),(j,\tilde{j})}(t) =: \lambda^r_{\tilde{i}\tilde{j}}(t),
\end{align*}
we have that $X_b$ and $X_r$ are independent. Using the lexicographical ordering, we can, in this case, obtain compact matrix representations in terms of the two processes as follows.  The transition intensity matrix of $X$ is now of the form
\begin{align*}
\mat{\Lambda}(t) = \mat{\Lambda}_b(t) \oplus \mat{\Lambda}_r(t) = \mat{\Lambda}_b(t)\otimes \mat{I}_p + \mat{I}_q \otimes \mat{\Lambda}_r(t) ,
\end{align*}
where $\oplus$ denotes the Kronecker sum, and where $\mat{I}_n$ denotes the identity matrix of dimension $n\times n$. We recall that the Kronecker product, $\otimes$, is defined by 
$\mat{A}\otimes \mat{B} = \{  a_{ij}\mat{B} \}$, where $\mat{A}=\{ a_{ij}\}$. 

The interest rate vector satisfies 
\[   \vect{r}(t) = \vect{e}\otimes (r_1(t),...,r_p(t)),   \]
where $\vect{e}=(1,1,...,1)^\prime$. 

If we further assume that the payment process \eqref{eq:betalingsfunction} is independent of $X_r$, i.e.\ such that the payment functions satisfy, for all  $i,j\in E_b$ and $\tilde{i},\tilde{j}\in E_r$, 
\begin{align*}
b_{(i,\tilde{i})}(t) &= b_{(i,\tilde{j})}(t) =: b^b_i(t), \\[0.2 cm]
b_{(i,\tilde{i}),(j,\tilde{i})}(t) &= b_{(i,\tilde{j}),(j,\tilde{j})}(t) =: b^b_{ij}(t),\\[0.2 cm]
b_{(i,\tilde{i}),(i,\tilde{j})}(t) &= b_{(j,\tilde{i}),(j,\tilde{j})}(t) = 0,
\end{align*}
we have that the payment matrices are on the form 
\begin{align*}
\mat{B}(t) &= \mat{B}^b(t) \otimes \mat{I} \\[0.2 cm]
\vect{b}(t) &= \vect{b}^b(t)\otimes\vect{e} 
\end{align*}
where $$\vect{b}^b(t) = \left(b_1^b(t),\ldots,b_q^b(t)\right)'\qquad \mathrm{and}\qquad  \mat{B}(t) = \left\{b_{ij}^b(t)\right\}_{i,j\in E_b}.$$   
Similarly, we may directly decompose $\mat{\Lambda}_b$:  
\[  \mat{\Lambda}_b(t)= \mat{\Lambda}^1_b(t) \oplus \mat{\Lambda}^0_b(t) \]
such that the decomposition \eqref{eq:C-D-mat} reads
\[  \mat{\Lambda}^1(t) = \mat{\Lambda}_b^1(t)\otimes \mat{I}_q, \ \ \mbox{and} \ \ \mat{\Lambda}^0(t) = \mat{\Lambda}_b^0\otimes \mat{I}_p + \mat{I}_q\otimes \mat{\Lambda}_r(t) = \mat{\Lambda}_b^0(t)\oplus \mat{\Lambda}_r(t), \] 
The conceptual difference in the decomposition of $\mat{\Lambda}^1$ and $\mat{\Lambda}^0$ lies in the absence of lump sum payments upon transition between interest levels.
\demo

\end{example}

\subsection{Reserves}\mbox{ }

We now consider the valuation of the payment process $B$. Introduce the matrix of partial state--wise prospective reserves, 
\begin{align*}
\mat{V}(s,t)&=\left\{ V_{ij}(s,t) \right\}_{i,j\in E}, \\[0.2 cm]
V_{ij}(s,t) &= \Exp\! \left. \left( 1\{ X(t)=j\} \int_s^t \e^{-\int_s^x r_{X(u)}(u)\dd u} \dd B(x)\,  \right|  X(s)=i  \right)\!.
\end{align*}
Due to the stochastic interest rates, this is an extension of \cite{Bladt2020}. With $\mat{D}(s,t)$, introduced in \eqref{eq:D-matrix}, modified to the setup of this section as
\begin{align*}
\mat{D}(s,t) = \prod_s^t \left(\mat{I} + \left[\mat{\Lambda}(u) -\mat{\Delta}(\vect{r}(u)) \right]\!\dd u  \right)\!,
\end{align*}
we have the following result. 
\begin{theorem}\label{th:reserve}
The matrix of partial state-wise prospective reserves $\mat{V}(s,t)$ has the following integral representation: 
\begin{equation}
  \mat{V}(s,t)=\int_s^t \mat{D}(s,x)\mat{R}(x)\mat{P}(x,t)\dd x.  \label{eq:reserve}
\end{equation}
\end{theorem}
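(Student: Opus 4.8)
The plan is to derive an integro-differential (backward) equation for $\mat{V}(s,t)$ in the variable $s$ by a first-step / infinitesimal conditioning argument, exactly mirroring the sample-path argument used in the proof of Theorem \ref{th:main_bond}, and then verify that the right-hand side of \eqref{eq:reserve} solves that equation with the correct boundary condition at $s=t$. First I would fix $t$ and condition on the behaviour of $X$ on $[s,s+\dd s)$. Over such an interval the present-value factor contributes $(1-r_{X(s)}(s)\,\dd s)$, the payment process contributes an expected reward $\mat{R}(s)\,\dd s$ accumulated in the starting state (or upon an immediate transition), and with probability governed by $\mat{\Lambda}(s)\,\dd s$ the chain moves to a new state from which the remaining reserve is $\mat{V}(s+\dd s,t)$. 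Collecting terms and letting $\dd s\to 0$ gives, in matrix form,
\begin{align*}
  -\frac{\partial}{\partial s}\mat{V}(s,t) = \mat{R}(s)\mat{P}(s,t) + \bigl(\mat{\Lambda}(s)-\mat{\Delta}(\vect{r}(s))\bigr)\mat{V}(s,t),
\end{align*}
with terminal condition $\mat{V}(t,t)=\mat{0}$. The inhomogeneous term $\mat{R}(s)\mat{P}(s,t)$ appears because a reward accumulated at time $s$ in state $k$ must still be ``transported'' to the terminal condition $1\{X(t)=j\}$, which contributes the factor $p_{kj}(s,t)$.

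Next I would solve this linear matrix ODE by the usual variation-of-constants / product-integral argument. The homogeneous part has fundamental solution $\mat{D}(s,x)=\prod_s^x(\mat{I}+[\mat{\Lambda}(u)-\mat{\Delta}(\vect{r}(u))]\dd u)$ (this is precisely the $\mat{D}$-matrix, now with $\mat{\Lambda}$ in place of $\mat{M}$), which by \eqref{eq:prod_int_prod_rule}--\eqref{eq:prod_int_inv} satisfies the semigroup and invertibility properties needed. Writing $\mat{V}(s,t)=\int_s^t \mat{D}(s,x)\mat{R}(x)\mat{P}(x,t)\dd x$ and differentiating in $s$: the derivative of the lower limit produces $-\mat{D}(s,s)\mat{R}(s)\mat{P}(s,t)=-\mat{R}(s)\mat{P}(s,t)$, while differentiating $\mat{D}(s,x)$ under the integral sign, using $\tfrac{\partial}{\partial s}\mat{D}(s,x)=-(\mat{\Lambda}(s)-\mat{\Delta}(\vect{r}(s)))\mat{D}(s,x)$ (the backward equation \eqref{eq:discount_diff} adapted to $\mat{\Lambda}$), yields $-(\mat{\Lambda}(s)-\mat{\Delta}(\vect{r}(s)))\mat{V}(s,t)$. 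Adding these reproduces the ODE above, and the boundary value $\mat{V}(t,t)=\mat{0}$ is immediate since the integral is empty. Uniqueness of solutions to the linear ODE then forces equality, proving \eqref{eq:reserve}.

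The step I expect to require the most care is the infinitesimal conditioning that produces the inhomogeneous term $\mat{R}(s)\mat{P}(s,t)$ with the correct interpretation: one must account separately for the continuous payment rate $b_i(s)\dd s$ during a sojourn, for the Poisson-type lump sums $b_{ii}(s)$ during a sojourn (encoded through $\mat{\Lambda}^1$ and the probability $\lambda^1_{ii}(s)\dd s$), and for lump sums $b_{ij}(s)$ triggered at a genuine transition with probability \eqref{eq:lump_sum_probability}; all of these are bundled into $\mat{R}(s)=\mat{\Lambda}^1(s)\bullet\mat{B}(s)+\mat{\Delta}(\vect{b}(s))$, and one must check that a reward accrued at time $s$ still carries the terminal indicator, hence the factor $\mat{P}(s,t)$ rather than $\vect{e}$ or $\mat{I}$. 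Once the differential equation is correctly set up, the remainder is a routine verification using Van-Loan-type manipulations and the properties of product integrals recalled in Section \ref{sec:summary}; indeed, \eqref{eq:reserve} can alternatively be read off directly as the upper-right block of a single product integral of a $2\times 2$ block matrix with diagonal blocks $\mat{\Lambda}(u)-\mat{\Delta}(\vect{r}(u))$ and $\mat{\Lambda}(u)$ and off-diagonal block $\mat{R}(u)$, via \eqref{eqLvan-loan}, which I would mention as a sanity check.
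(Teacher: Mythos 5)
Your proof is correct, but it takes a genuinely different route from the one in the paper. You first derive the backward (Thiele--type) differential equation $-\tfrac{\partial}{\partial s}\mat{V}(s,t)=\mat{R}(s)\mat{P}(s,t)+(\mat{\Lambda}(s)-\mat{\Delta}(\vect{r}(s)))\mat{V}(s,t)$ by an infinitesimal first--step argument and then verify that the right-hand side of \eqref{eq:reserve} is the unique solution with terminal value $\mat{0}$; in effect you prove Thiele first and integrate. The paper goes in the opposite direction: it conditions at the intermediate time $x$ at which each payment accrues, inserts $1=\sum_k 1\{X(x)=k\}$ inside the integral, applies Fubini and the tower property with respect to ${\mathcal F}_x$, and uses the Markov property to factor the expectation into $D_{ik}(s,x)$ (discounted occupation up to $x$), the expected reward in $[x,x+\dd x)$ encoded in $\mat{R}(x)$, and $p_{\ell j}(x,t)$ (transport of the terminal indicator); this yields \eqref{eq:reserve} directly, and Thiele's equation (Theorem \ref{th:thiele}) is then deduced afterwards from the Van Loan block representation. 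Your route buys a short verification once the ODE is in hand, at the cost of having to justify the infinitesimal expansion and the a priori differentiability of $s\mapsto\mat{V}(s,t)$, and of invoking uniqueness for the linear matrix ODE; the paper's computation is constructive and needs neither, and it makes transparent why the terminal indicator contributes the factor $\mat{P}(x,t)$ evaluated at the payment time $x$ rather than at $s$. Your identification of the inhomogeneous term $\mat{R}(s)\mat{P}(s,t)$ --- with the correct bookkeeping of sojourn rates, Poisson lump sums and transition lump sums bundled into $\mat{R}$ --- is exactly right, and your closing remark that \eqref{eq:reserve} can be read off as the upper-right block of a Van Loan product integral is precisely the corollary the paper states immediately after the theorem.
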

\begin{proof}
See Appendix \ref{sec:proofs}.
\end{proof}

The actual computation of the reserves can be effectively executed using the following Van--Loan type of formula, which avoids integration.

\begin{cor}
 $\mat{V}(s,t)$ can be extracted from the relation
\[  
\prod_s^t \left( \mat{I}  + \begin{pmatrix}
\mat{\Lambda}(u) -\mat{\Delta}(\vect{r}(u)) & \mat{R}(u) \\
\mat{0} & \mat{\Lambda}(u)
\end{pmatrix} \! \dd u\right) = \begin{pmatrix}
\mat{D}(s,t) &  \mat{V}(s,t) \\ 
\mat{0} & \mat{P}(s,t)
\end{pmatrix}  .
\]
\end{cor}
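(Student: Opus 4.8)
The plan is to apply Van--Loan's formula \eqref{eqLvan-loan} to the block matrix function
\[
\mat{F}(u) = \begin{pmatrix}
\mat{\Lambda}(u) -\mat{\Delta}(\vect{r}(u)) & \mat{R}(u) \\
\mat{0} & \mat{\Lambda}(u)
\end{pmatrix}
\]
with the identifications $\mat{A}(u) = \mat{\Lambda}(u) - \mat{\Delta}(\vect{r}(u))$, $\mat{B}(u) = \mat{R}(u)$, and $\mat{C}(u) = \mat{\Lambda}(u)$. These are all piecewise continuous (the interest rate functions and payment rates are assumed continuous or at least Riemann integrable), so the hypotheses of \eqref{eqLvan-loan} are met.

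The block $(1,1)$ of the product integral of $\mat{I} + \mat{F}(u)\dd u$ is $\prod_s^t(\mat{I} + \mat{A}(u)\dd u) = \prod_s^t(\mat{I} + [\mat{\Lambda}(u) - \mat{\Delta}(\vect{r}(u))]\dd u)$, which is exactly $\mat{D}(s,t)$ by its definition in this section (the modified form of \eqref{eq:D-matrix}). The block $(2,2)$ is $\prod_s^t(\mat{I} + \mat{\Lambda}(u)\dd u) = \mat{P}(s,t)$, the transition matrix of $X$, by \eqref{eq:Kolmogorov-diff} and the definition of the product integral. The block $(1,2)$ delivered by Van--Loan's formula is
\[
\int_s^t \prod_s^x(\mat{I} + \mat{A}(u)\dd u)\,\mat{B}(x)\,\prod_x^t(\mat{I} + \mat{C}(u)\dd u)\,\dd x = \int_s^t \mat{D}(s,x)\,\mat{R}(x)\,\mat{P}(x,t)\,\dd x,
\]
which by Theorem \ref{th:reserve} equals $\mat{V}(s,t)$. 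The lower--left block is $\mat{0}$ as in the statement of \eqref{eqLvan-loan}. Collecting the four blocks gives the claimed identity.

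There is essentially no obstacle here: the corollary is a direct substitution into an already--established formula, and the only thing to verify is that the three named blocks of the left--hand side coincide with $\mat{D}(s,t)$, $\mat{P}(s,t)$, and (via Theorem \ref{th:reserve}) $\mat{V}(s,t)$. The one point worth a sentence is that the dimensions are consistent: $\mat{A}(u)$ and $\mat{C}(u)$ are both $|E|\times|E|$ square matrices, so $\mat{B}(u) = \mat{R}(u)$ is also $|E|\times|E|$, and the off--diagonal integral in Van--Loan's formula is well defined. Thus the proof is just: invoke \eqref{eqLvan-loan}, identify the diagonal blocks with $\mat{D}$ and $\mat{P}$, and identify the upper--right block with $\mat{V}$ through Theorem \ref{th:reserve}.
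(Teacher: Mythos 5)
Your proof is correct and is exactly the argument the paper intends: the corollary follows by substituting $\mat{A}=\mat{\Lambda}-\mat{\Delta}(\vect{r})$, $\mat{B}=\mat{R}$, $\mat{C}=\mat{\Lambda}$ into Van--Loan's formula \eqref{eqLvan-loan} and identifying the upper--right block with the integral representation of Theorem \ref{th:reserve}. The paper leaves this verification implicit, so your write--up matches its (unstated) proof in both route and substance.
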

Finally, we state and prove Thiele's differential equations for partial reserves with stochastic interest rates.
\begin{theorem}[Thiele]\label{th:thiele}
\[ \frac{\partial}{\partial s}\mat{V}(s,t) =  -\left[ \mat{\Lambda}(s) - \mat{\Delta}(\vect{r}(s))  \right]\!\mat{V}(s,t) - \mat{R}(s)\mat{P}(s,t)   , \]
where $\mat{V}(t,t)=\mat{0}$.\
 For the conventional state--wise prospective reserves, $\mat{V}^{Th}(t)=\mat{V}(t,T)\vect{e}$, this has the form
 \[  \frac{\partial}{\partial t} \mat{V}^{Th}(t) = \mat{\Delta}(\vect{r}(t))\mat{V}^{Th}(t) - \mat{\Lambda}(t)\mat{V}^{Th}(t) - \mat{R}(t)\vect{e} , \]
where $\mat{V}^{Th}(T) = \vect{0}$. 
\end{theorem}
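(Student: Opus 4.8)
The plan is to differentiate the integral representation \eqref{eq:reserve} of Theorem \ref{th:reserve} with respect to the lower limit $s$. Writing
\[ \mat{V}(s,t) = \int_s^t \mat{D}(s,x)\mat{R}(x)\mat{P}(x,t)\,\dd x , \]
the Leibniz rule produces a boundary contribution $-\mat{D}(s,s)\mat{R}(s)\mat{P}(s,t)$ from the moving endpoint, together with $\int_s^t \bigl(\tfrac{\partial}{\partial s}\mat{D}(s,x)\bigr)\mat{R}(x)\mat{P}(x,t)\,\dd x$ from differentiating the kernel. Since $\mat{D}(s,s)=\mat{I}$ and, by the backward equation for the product integral defining $\mat{D}$ (the analogue of \eqref{eq:discount_diff} with $\mat{M}$ replaced by $\mat{\Lambda}$), $\tfrac{\partial}{\partial s}\mat{D}(s,x) = -[\mat{\Lambda}(s)-\mat{\Delta}(\vect{r}(s))]\mat{D}(s,x)$, the integral term factors as $-[\mat{\Lambda}(s)-\mat{\Delta}(\vect{r}(s))]\mat{V}(s,t)$, and the stated equation follows. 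The terminal condition $\mat{V}(t,t)=\mat{0}$ is immediate because the integral is then over an empty interval.

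For the conventional reserve $\mat{V}^{Th}(t)=\mat{V}(t,T)\vect{e}$, I would specialise the first equation to first argument $s=t$, second argument $T$, and multiply by $\vect{e}$ on the right. Because $\mat{P}(t,T)$ is a transition (stochastic) matrix, $\mat{P}(t,T)\vect{e}=\vect{e}$, so the source term collapses to $-\mat{R}(t)\vect{e}$; rewriting $-[\mat{\Lambda}(t)-\mat{\Delta}(\vect{r}(t))]\mat{V}^{Th}(t)$ as $\mat{\Delta}(\vect{r}(t))\mat{V}^{Th}(t)-\mat{\Lambda}(t)\mat{V}^{Th}(t)$ yields the displayed form, and $\mat{V}^{Th}(T)=\mat{V}(T,T)\vect{e}=\vect{0}$.

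The step requiring the most care is the justification of differentiation under the integral sign: one needs the integrand $x\mapsto \mat{D}(s,x)\mat{R}(x)\mat{P}(x,t)$ to be regular enough (it is continuous in $x$ away from the discontinuities of the piecewise-continuous $\mat{R}$) and $s\mapsto \mat{D}(s,x)$ to be continuously differentiable with derivative locally uniformly bounded, both of which follow from the standard regularity theory of product integrals and of linear ODE systems with piecewise-continuous coefficients already used throughout the paper. An equivalent alternative would be to differentiate the $2\times 2$ block Van--Loan identity of the preceding corollary directly via the backward Kolmogorov equation for the block product integral and read off the $(1,2)$-block; this bypasses the explicit Leibniz computation but amounts to the same argument.
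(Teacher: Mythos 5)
Your argument is correct, and it reaches the paper's two displayed equations by a slightly different route. The paper's proof differentiates the $2\times 2$ block product integral of the preceding corollary with respect to $s$ using Kolmogorov's backward equation and reads Thiele's equation off the $(1,2)$--block — i.e.\ exactly the ``equivalent alternative'' you mention in your closing paragraph. Your primary route instead applies the Leibniz rule to the integral representation \eqref{eq:reserve} of Theorem \ref{th:reserve}, using $\mat{D}(s,s)=\mat{I}$ for the boundary term and the backward equation $\tfrac{\partial}{\partial s}\mat{D}(s,x)=-[\mat{\Lambda}(s)-\mat{\Delta}(\vect{r}(s))]\mat{D}(s,x)$ (the analogue of \eqref{eq:discount_diff}) to factor the remaining integral; the reduction to the conventional reserve via $\mat{P}(t,T)\vect{e}=\vect{e}$ is the same in both treatments. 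The trade--off is minor: the paper's block approach bundles the regularity issues into the already--established Van--Loan identity \eqref{eqLvan-loan} and its backward equation, so no separate justification of differentiation under the integral sign is needed, whereas your route makes the structure of the two terms (boundary contribution versus propagation of the discounted semigroup) more transparent at the cost of the regularity remarks you correctly flag. Both are valid; no gap.
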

\begin{proof}
See Appendix \ref{sec:proofs}.
\end{proof}

\begin{rem}\rm
Writing out the elements of the differential equation for $\mat{V}^{Th}$, we get for $i\in E$,  
\begin{align*}
\frac{\partial}{\partial t} V^{Th}_{i}(t) &= r_{i}(t)V^{ {\farv{blue}Th}  }_{i}(t) - b_{i}(t)  - \!\!\!\sum_{j\in  E}\!\lambda_{ij}(t)\!\left(b_{ij}(t) + V^{Th}_{j}(t) - V^{Th}_{i}(t)\right)\!,\\[0.2 cm]
V_{i}^{Th}(T) &= 0, 
\end{align*}
which is the differential equation obtained in \cite[(3.2)]{NorbergHigherOrder} in the case of a first-order moment. \demoo 
\end{rem}

\subsection{Higher order moments}\mbox{ }
Consider the matrix of partial state-wise  higher order moments of future payments, given by, for $k\in \mathbb{N}$ (see \cite[(3.6)-(3.7)]{Bladt2020}),
\begin{align*}
\mat{V}^{(k)}(t,T)&=\left\{ V^{(k)}_{ij}(t,T) \right\}_{i,j\in  E}, \\[0.2 cm]
V^{(k)}_{ij}(t,T) &= \Exp\! \left. \left( 1_{ (X(T)=j)} \left(\int_t^T \e^{-\int_t^x r_{X(u)}(u)\dd u} \dd B(x)\right)^{\!k}\,  \right|  X(t)=i  \right)\!,
\end{align*}
and introduce what we shall term the \textit{reduced} partial state-wise higher order moments: 
\begin{align*}
\mat{V}^{(k)}_r(t,T) = \frac{\mat{V}^{(k)}(t,T)}{k!}.
\end{align*}
Since all payment functions and transition rates are deterministic, results for these higher-order moments are now straightforward to obtain by using the undiscounted result, 
\[ 
 \mat{m}^{(k)}_r(t,T)
  =  \int_t^T \mat{P}(t,x)\mat{R}(x)\mat{m}_r^{(k-1)}(x,T)\,\dd x + \sum_{m=2}^k \int_t^T \mat{P}(t,x) 
  \mat{C}_r^{(m)}(x) \mat{m}_r^{(k-m)}(x,T)\,\dd x ,
 \]
 where $\mat{m}^{(k)}_r(t,T)$, $k\in \mathbb{N}$, contains the partial state-wise $k$'th moment, normalised by $k!$, of the undiscounted future payments (see \cite[(7.4)]{Bladt2020}), i.e.\ $\mat{V}^{(k)}_r(s,T)$ with no interest rate. Indeed, rates $b_{i}(t)$ and lump sums $b_{ij}(t)$ must be replaced by the discounted versions with discounting factor, $\exp (-\int_s^t r_{X(u)}\dd u)$ (for fixed $s\leq t$).\ Powers of lumps sums like $b_{ij}(t)^m$, $m\in \mathbb{N}$, are discounted by  $\exp (-m\int_s^t r_{X(u)}\dd u)$.\ Denoting 
 \[  \mat{D}^{(m)}(s,t) = \prod_s^t (\mat{I} +\left[ \mat{\Lambda}(u) - m\mat{\Delta}(\vect{r}(u)) \right]\dd u ), \quad m\in \mathbb{N}, \]
we then obtain the following version of Hattendorff's theorem for partial reserves with stochastic interest rate.
 \begin{theorem}\label{th:moments}
 The matrix of reduced partial state-wise higher order moments satisfies the integral equation, for $k\in \mathbb{N}_0$, 
 \[ 
 \mat{V}^{(k)}_r(t,T)
  =  \int_t^T \mat{D}^{(k)}(t,x)\mat{R}(x)\mat{V}_r^{(k-1)}(x,T)\dd x + \sum_{m=2}^k \int_t^T \mat{D}^{(k)}(t,x) 
  \mat{C}_r^{(m)}(x) \mat{V}_r^{(k-m)}(x,T)\dd x .
 \]
 \end{theorem}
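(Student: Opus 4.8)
The plan is to obtain Theorem~\ref{th:moments} directly from the undiscounted recursion for $\mat{m}_r^{(k)}(t,T)$ recalled just above (from \cite[(7.4)]{Bladt2020}), the only genuinely new ingredient being the accounting of the path--dependent discount factor $\exp\!\bigl(-\int_s^x r_{X(u)}(u)\dd u\bigr)$. The starting observation is the multiplicative splitting of the discounted payment functional: for $t\le x\le T$,
\[
 \int_t^T \e^{-\int_t^y r_{X(u)}(u)\dd u}\dd B(y) = \int_t^x \e^{-\int_t^y r_{X(u)}(u)\dd u}\dd B(y) + \e^{-\int_t^x r_{X(u)}(u)\dd u}\int_x^T \e^{-\int_x^y r_{X(u)}(u)\dd u}\dd B(y).
\]
Raising to the $k$-th power and expanding the product into a sum of ordered iterated integrals $\dd B(x_1)\cdots\dd B(x_k)$ with $x_1\le\cdots\le x_k$ (exactly the expansion underlying the proof of \cite[(7.4)]{Bladt2020}) attaches to the $l$-th factor the discount $\exp(-\int_t^{x_l}r_{X(u)}(u)\dd u)$. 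Summing the exponents over $l$ and regrouping over the inter--event intervals, the total discount equals $\exp(-k\int_t^{x_1}r_{X(u)}(u)\dd u)$ times the discount factor of the same problem restarted at the first event time $x_1$ with one fewer stream (or, when the first event is a jump hit by $m$ streams simultaneously, $m$ fewer streams); that is, on the segment preceding the first reward all $k$ streams are ``alive'' and are discounted at total rate $k\,r$.

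Next I would peel off this first reward--contributing event just as in the undiscounted derivation. On $[t,x_1)$ the factor $\exp(-k\int_t^{x_1}r_{X(u)}(u)\dd u)$ combines with the evolution of $X$ so that, by a manipulation of the type in \eqref{eq:sum_commute1} and \eqref{eqLvan-loan}, the transition matrix $\mat{P}(t,x_1)$ is replaced by $\prod_t^{x_1}\bigl(\mat{I}+[\mat{\Lambda}(u)-k\mat{\Delta}(\vect{r}(u))]\dd u\bigr)=\mat{D}^{(k)}(t,x_1)$. A single--multiplicity first event (a continuous contribution, or a lump at an isolated transition) then contributes the matrix $\mat{R}(x_1)$ and leaves a $(k-1)$--stream problem on $[x_1,T]$, namely $\mat{V}_r^{(k-1)}(x_1,T)$, whereas an $m$-fold coincident jump ($m\ge2$) contributes $\mat{C}_r^{(m)}(x_1)$ and leaves $\mat{V}_r^{(k-m)}(x_1,T)$; the binomial coefficients and the $k!$ normalisation producing the reduced quantities $\mat{V}_r^{(k)}$ and $\mat{C}_r^{(m)}$ (with $\mat{R}$ entering unchanged) are identical to the deterministic--interest case and would be quoted from \cite{Bladt2020}. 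Integrating over $x_1\in[t,T]$ yields the asserted identity, the recursion being understood for $k\ge1$ with base datum $\mat{V}_r^{(0)}(t,T)=\mat{P}(t,T)$.

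An equivalent, more self--contained route is to show that both sides solve the same backward integro--differential equation in $t$ with terminal value $\mat{0}$ at $t=T$: for the left--hand side one conditions on $X(t+\dd t)$ and on the payments made in $[t,t+\dd t)$, writes $\int_t^T\e^{-\int_t^y r}\dd B = b_{X(t)}(t)\dd t+(\text{lump term})+\e^{-r_{X(t)}(t)\dd t}\int_{t+\dd t}^T\e^{-\int_{t+\dd t}^y r}\dd B$, raises this to the $k$-th power and retains the $O(\dd t)$ contributions; for the right--hand side one differentiates the integral using $\tfrac{\partial}{\partial t}\mat{D}^{(k)}(t,x)=-[\mat{\Lambda}(t)-k\mat{\Delta}(\vect{r}(t))]\mat{D}^{(k)}(t,x)$ together with Theorems~\ref{th:reserve}--\ref{th:thiele}. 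In either approach the only delicate point is the combinatorial accounting of simultaneous jumps that generates the terms $\mat{C}_r^{(m)}$ with the correct range $2\le m\le k$ and the correct coefficients; this is already settled in the deterministic--interest setting of \cite{Bladt2020}, so the genuinely new content of the proof is merely the remark that a segment carrying $j$ live streams is discounted at rate $j\,r$, which is immediate because $r$ is a deterministic function of the path of $X$. Hence I expect no substantial obstacle here, only bookkeeping.
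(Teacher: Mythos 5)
Your first route is essentially the paper's own proof: the appendix argument likewise expands the $k$-th power into an iterated integral, peels off the first reward-contributing time, separates the multiplicity-one contribution (giving $\mat{D}^{(k)}\mat{R}\mat{V}_r^{(k-1)}$ via the identity $(\int_t^T\cdots)^k=k\int_t^T\e^{-k\int_t^x r}(\cdots)^{k-1}\dd A(x)$) from the $m$-fold jump coincidences (giving $\binom{k}{m}\mat{D}^{(k)}\mat{C}^{(m)}\mat{V}^{(k-m)}$, which the $k!$ normalisation turns into $\mat{D}^{(k)}\mat{C}_r^{(m)}\mat{V}_r^{(k-m)}$), with exactly your key observation that all $k$ discount streams are live before the first event so the propagator on $[t,x_1]$ is $\mat{D}^{(k)}$ rather than $\mat{P}$. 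Your alternative differential-equation route is not the one the paper takes for this integral equation (it appears only later, for Theorem~\ref{th:reduced-hattendorff}, via the Van Loan block representation), but the primary argument you give matches the paper's.
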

\begin{proof}
See Appendix \ref{sec:proofs}. 
\end{proof}

Defining
\begin{equation*}
\arraycolsep=0.0pt\def\arraystretch{2.2}
\mat{F}_U^{(k)}(x)=
\left(\begin{array}{ccccccc}
\mat{\Lambda}(x)-k\mat{\Delta}(\vect{r}(x))  & \mat{R}(x) & \mat{C}_r^{(2)}(x) &  \cdots & \mat{C}_r^{(k-1)}(x) & \mat{C}_r^{(k)}(x) \\
\mat{0} & \mat{\Lambda}(x)-(k-1)\mat{\Delta}(\vect{r}(x))  & \mat{R}(x) & \cdots & \mat{C}_r^{(k-2)}(x) &\mat{C}_r^{(k-1)}(x) \\
%\mat{0} &\mat{0} &\mat{\Lambda}-(k-2)r(x)\mat{I} & {k-2\choose 1}\mat{R} & \cdots &  {k-2\choose k-3}\mat{C}^{(k-3)}  & \mat{C}^{(k-2)} \\
\vdots & \vdots & \vdots &  \vdots \vdots \vdots & \vdots &\vdots  \\
\mat{0} &\mat{0} &\mat{0} & \cdots & \mat{\Lambda}(x)-\mat{\Delta}(\vect{r}(x))  &\mat{R}(x) \\
\mat{0} &\mat{0} &\mat{0} & \cdots  &\mat{0} & \mat{\Lambda}(x) 
\end{array}\right) \label{eq:F-gen-res-1}
\end{equation*}
  we get by Van Loan that 
\begin{equation}
  \prod_t^T (\mat{I} + \mat{F}_U^{(k)}(x)\,\!\dd x) 
=\left(
\arraycolsep=3.0pt\def\arraystretch{1.2}
\begin{array}{lllllll}
* & * & * & * & \cdots &* & \mat{V}_r^{(k)}(t) \\
* & * & * & * & \cdots &* & \mat{V}_r^{(k-1)}(t) \\
* & * & * & * & \cdots &* & \mat{V}_r^{(k-2)}(t) \\
\vdots & \vdots & \vdots &\vdots & \vdots\vdots\vdots & \vdots & \vdots \\
* & * & * & * & \cdots &* & \mat{V}_r^{(1)}(t) \\
* & * & * & * & \cdots &* & \mat{P}(t,T) \\
\end{array} \right) . \label{eq:symbolically}
\end{equation}
From these results, we can derive a number of classical results. Differentiation of  \eqref{eq:symbolically} gives
\begin{eqnarray*}
\lefteqn{\left(
\arraycolsep=3.0pt\def\arraystretch{1.2}
\begin{array}{lllllll}
* & * & * & * & \cdots &* & \frac{\partial}{\partial t} \mat{V}_r^{(k)}(t) \\
* & * & * & * & \cdots &* & \frac{\partial}{\partial t}\mat{V}_r^{(k-1)}(t) \\
* & * & * & * & \cdots &* & \frac{\partial}{\partial t}\mat{V}_r^{(k-2)}(t) \\
\vdots & \vdots & \vdots &\vdots & \vdots\vdots\vdots & \vdots & \vdots \\
* & * & * & * & \cdots &* & \frac{\partial}{\partial t}\mat{V}_r^{(1)}(t) \\
* & * & * & * & \cdots &* & \frac{\partial}{\partial t}\mat{P}(t,T) \\
\end{array} \right) = -\mat{F}_U^{(k)}(t)  \prod_t^T (\mat{I} + \mat{F}_U^{(k)}(x)\,\dd x)}~~~ \\
&=& - \left(\begin{array}{ccccccc}
\mat{\Lambda}(t)-k\mat{\Delta}(\vect{r}(t)) & \mat{R}(t) & \mat{C}_r^{(2)}(t) &  \cdots & \mat{C}_r^{(k-1)}(t) & \mat{C}_r^{(k)}(t) \\
\mat{0} & \mat{\Lambda}(t)-(k-1)\mat{\Delta}(\vect{r}(t)) & \mat{R}(t) & \cdots & \mat{C}_r^{(k-2)}(t) &\mat{C}_r^{(k-1)}(t) \\
\vdots & \vdots & \vdots &  \vdots \vdots \vdots & \vdots &\vdots  \\
\mat{0} &\mat{0} &\mat{0} & \cdots & \mat{\Lambda}(t)-\mat{\Delta}(\vect{r}(t)) &\mat{R}(t) \\
\mat{0} &\mat{0} &\mat{0} & \cdots  &\mat{0} & \mat{\Lambda}(t) 
\end{array}\right) \\
&&\times
\left(
\begin{array}{lllllll}
* & * & * & * & \cdots &* &  \mat{V}_r^{(k)}(t) \\
* & * & * & * & \cdots &* & \mat{V}_r^{(k-1)}(t) \\
* & * & * & * & \cdots &* & \mat{V}_r^{(k-2)}(t) \\
\vdots & \vdots & \vdots &\vdots & \vdots\vdots\vdots & \vdots & \vdots \\
* & * & * & * & \cdots &* & \mat{V}_r^{(1)}(t) \\
* & * & * & * & \cdots &* & \mat{P}(t,T) \\
\end{array} \right)
\end{eqnarray*}
We then obtain the following differential equation by only considering the first row times the last column. 
\begin{theorem}\label{th:reduced-hattendorff}
The matrix of reduced partial state-wise higher order moments satisfies the system of differential equations, for $k\in \mathbb{N}_0$,
 \begin{align*}
   \frac{\partial}{\partial s}\mat{V}_r^{(k)}(t) = \left( k\mat{\Delta}(\vect{r}(u)) -\mat{\Lambda}(t)  \right)\! \mat{V}_r^{(k)}(t) - \mat{R}(t)\mat{V}_r^{(k-1)}(t)
   - \sum_{i=2}^k  \mat{C}_r^{(i)}(t)\mat{V}_r^{(k-i)}(t) , \end{align*} 
   with terminal condition $\mat{V}_r^{(k)}(T)=1_{(k=0)}\mat{I}$.  
\end{theorem}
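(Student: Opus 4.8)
The plan is to obtain Theorem~\ref{th:reduced-hattendorff} directly from the differentiation of the Van--Loan identity \eqref{eq:symbolically}, exactly as sketched in the running text just above the statement, and to provide the two missing ingredients that the sketch glosses over: the terminal condition and the bookkeeping that isolates the correct block.

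First I would record that the integral representation of Theorem~\ref{th:moments} shows $t\mapsto\mat V_r^{(k)}(t,T)$ is differentiable (the integrand is piecewise continuous in $x$, so the integrals in $t$ are $C^1$), and likewise $t\mapsto\mat P(t,T)$ is $C^1$ by Kolmogorov's backward equation \eqref{eq:Kolmogorov-diff}. Hence the block matrix on the right of \eqref{eq:symbolically} is differentiable in $t$, and since the left side is a product integral $\prod_t^T(\mat I+\mat F_U^{(k)}(x)\,\dd x)$, the backward Kolmogorov equation for product integrals (the second line of \eqref{eq:Kolmogorov-diff}, applied to the matrix function $\mat F_U^{(k)}$) gives $\tfrac{\partial}{\partial t}\prod_t^T(\mat I+\mat F_U^{(k)})=-\mat F_U^{(k)}(t)\prod_t^T(\mat I+\mat F_U^{(k)})$. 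This is precisely the matrix equation displayed before the theorem.

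Next I would extract the $(1,\text{last})$ block. Reading off the first block-row of $-\mat F_U^{(k)}(t)$ times the last block-column of the product integral in \eqref{eq:symbolically}, and using that the last block-column of the product integral is $(\mat V_r^{(k)}(t),\mat V_r^{(k-1)}(t),\dots,\mat V_r^{(1)}(t),\mat P(t,T))^\top$ (with $\mat V_r^{(0)}(t)\equiv\mat P(t,T)$), one gets
\[
\frac{\partial}{\partial t}\mat V_r^{(k)}(t)
= -\bigl(\mat\Lambda(t)-k\mat\Delta(\vect r(t))\bigr)\mat V_r^{(k)}(t)
-\mat R(t)\mat V_r^{(k-1)}(t)
-\sum_{i=2}^{k}\mat C_r^{(i)}(t)\mat V_r^{(k-i)}(t),
\]
which is the claimed identity after rearranging $-(\mat\Lambda-k\mat\Delta(\vect r))=k\mat\Delta(\vect r)-\mat\Lambda$. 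For the terminal condition, set $t=T$ in the definitions: the integrals in Theorem~\ref{th:moments} vanish at $t=T$ for $k\ge1$, so $\mat V_r^{(k)}(T,T)=\mat 0$, while for $k=0$ one has $\mat V_r^{(0)}(t,T)=\mat P(t,T)$ by definition, so $\mat V_r^{(0)}(T,T)=\mat I$; together these read $\mat V_r^{(k)}(T)=1_{(k=0)}\mat I$.

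The only genuinely delicate point is justifying that one may differentiate \eqref{eq:symbolically} and identify the resulting blocks, i.e.\ that \eqref{eq:symbolically} itself holds with the stated block structure; but this is exactly the content of the Van--Loan formula \eqref{eqLvan-loan} applied iteratively to the block upper-triangular generator $\mat F_U^{(k)}$, combined with Theorem~\ref{th:moments} to name the entries of the last column — both already available. So the main obstacle is purely notational: carrying the index shifts $\mat V_r^{(k)}\to\mat V_r^{(k-i)}$ through the block product correctly and treating the $k=0,1$ cases (where the sum $\sum_{i=2}^k$ is empty and $\mat R(t)\mat V_r^{(k-1)}$ is absent when $k=0$) consistently with the convention $\mat V_r^{(0)}=\mat P(\cdot,T)$.
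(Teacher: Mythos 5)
Your proposal is correct and follows essentially the same route as the paper: the paper's own proof consists precisely of differentiating the Van--Loan identity \eqref{eq:symbolically} via the backward Kolmogorov equation for the product integral of $\mat{F}_U^{(k)}$ and reading off the first block-row times the last block-column. Your added remarks on differentiability, the identification $\mat{V}_r^{(0)}(t,T)=\mat{P}(t,T)$, and the terminal condition merely make explicit details the paper leaves implicit.
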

\begin{rem}\rm
A martingal-based proof for the corresponding (unreduced) state-wise moments, $k!\mat{V}^{(k)}_r(t)\vect{e}$,  can be found in \cite{NorbergHigherOrder}. \demoo
\end{rem}

\begin{example}[Independence continued]\rm
We can continue our decompositions from the independence case of Example \ref{ex:life_1} to reserves and higher-order moments. Indeed, since 
 \begin{eqnarray*}
 \mat{\Lambda}_b(u) \oplus \mat{\Lambda}_r(u) -  k\mat{\Delta}(\vect{e}\otimes \vect{r}(u)) 
 &=& 
  \mat{\Lambda}_b(u)\otimes \mat{I} + \mat{I}\otimes (\mat{\Lambda}_r(u) - k\mat{\Delta}( \vect{r}(u)) )\\
  &=& \mat{\Lambda}_b(u) \oplus (\mat{\Lambda}_r(u) - k\mat{\Delta}( \vect{r}(u)),
   \end{eqnarray*}
 we get from \eqref{eq:Kronecker-sum} that
\begin{eqnarray*}
  \lefteqn{\prod_s^t \left( \mat{I} + \left( \mat{\Lambda}_b(u) \oplus \mat{\Lambda}_r(u) -  k\mat{\Delta}(\vect{e}\otimes \vect{r}(u)) \dd u \right)   \right)}~~\\
  &=& \prod_s^t (\mat{I}+\mat{\Lambda}_b(u)\dd u)\otimes \prod_s^t (\mat{I}+ (\mat{\Lambda}_r(u) - k\mat{\Delta}( \vect{r}(u))\dd u) \\
  &=& \prod_s^t (\mat{I}+\mat{\Lambda}_b(u)\dd u)\otimes \mat{D}^{(k)}(s,t)
\end{eqnarray*}
Thus, each diagonal block element can be computed using these representations when setting up the matrix $\mat{F}_U$ for the computation of these higher order moments. 

In particular, for partial state-wise reserves (i.e.\ $k=1$), we obtain a more direct expression. Assuming that the initial biometric state is $i\in E_b$, the terminal $j\in E_b$ and that the initial distribution of the fitted interest rate phase--type distribution is $\vect{\pi}$. Then
\begin{eqnarray*}
 V_{ij}(t,T) &=&
\lefteqn{(\vect{e}_i^\prime \otimes \vect{\pi})\int_t^T \left( \prod_t^x (\mat{I}+\mat{\Lambda}_b(u)\dd u)\otimes \mat{D}(t,x) \right)\left( \mat{R}(x)\otimes\mat{I} \right)}~~~ \\
&&\times\left( \prod_x^T (\mat{I}+\mat{\Lambda}_b(u)\dd u)\otimes \prod_x^T (\mat{I}+\mat{\Lambda}_r(u)\dd u) \right)\dd x\, (\vect{e}_j\otimes \vect{e}) \\
&=&\int_t^T \vect{\pi}\mat{D}(t,x)\vect{e} \vect{e}_i^\prime \mat{P}_b(t,x)\mat{R}(x)\mat{P}_b(x,T)\vect{e}_j\dd x  \\
&=& \int_t^T\Exp^{\mathbb{Q}}\left(\left.  \e^{-\int_t^T r_{X_r(u)}(u)\dd u } \right| {\mathcal F}(t)  \right) \vect{e}_i^\prime \mat{P}_b(t,x)\mat{R}(x)\mat{P}_b(x,T)\vect{e}_j\dd x, \\
\end{eqnarray*}
which is consistent with similar expressions obtained in \cite{NorbergStokInt}.   \demo
\end{example}

\subsection{Equivalence premium }\label{sec:premiums}\mbox{ }

Assume that $\mat{R}(t)=\mat{R}(t;\theta)$ such that $\theta$ is a parameter of either $\mat{B}(t)$ and/or $\mat{\Delta}(\vect{b}(t))$ only. Hence, $\theta$ could, e.g., be a premium rate in state $1$ or a transition payment between some states. We then write $\mat{V}(t)=\mat{V}(t;\theta)$ so that
\[ \boldsymbol{V}(t;\theta)=\int_{t}^{T} \prod_{t}^{u}\!\left(\boldsymbol{I}+[\boldsymbol{\Lambda}(s)-\mat{\Delta}(\vect{r}(s))] \mathrm{d} s\right) \!\boldsymbol{R}(u;\theta) \prod_{u}^{T}(\boldsymbol{I}+\boldsymbol{\Lambda}(s) \mathrm{d} s) \mathrm{d} u  . \]
If the interest rates satisfy $\mat{\Delta}(\vect{r}(s))\geq \mat{0}$, then $\boldsymbol{\Lambda}(s)-\mat{\Delta}(\vect{r}(s))$ is a sub--intensity matrix, so that $ \prod_{t}^{u}(\boldsymbol{I}+[\boldsymbol{\Lambda}(s)-\mat{\Delta}(\vect{r}(s))] \mathrm{d} s) $ is a sub--probability matrix, i.e. 
\[  0\leq  \prod_{t}^{u}\!\left(\boldsymbol{I}+[\boldsymbol{\Lambda}(s)-\mat{\Delta}(\vect{r}(s))] \mathrm{d} s\right)\!\vect{e} \leq \vect{e}  .  \]
If $\mat{R}(\cdot;\theta)$ is continuously differentiable and $\mat{\Lambda}$ and $\vect{r}$ are continuous, then by Leibniz' integral rule
\[ \frac{\partial}{\partial \theta} \mat{V}(t;\theta) =\int_{t}^{T} \prod_{t}^{u}\!\left(\boldsymbol{I}+[\boldsymbol{\Lambda}(s)-\mat{\Delta}(\vect{r}(s))] \mathrm{d} s\right) \!\frac{\partial}{\partial \theta}\boldsymbol{R}(u;\theta) \prod_{u}^{T}(\boldsymbol{I}+\boldsymbol{\Lambda}(s) \mathrm{d} s) \mathrm{d} u  . \]
Hence we get from the Van Loan formula \eqref{eqLvan-loan}, 
\begin{align}\label{eq:deriv_prod_int}
\prod_t^T \left( \mat{I} + 
\begin{pmatrix}
\mat{\Lambda}(u)-\mat{\Delta}(\mat{r}(u)) & \frac{\partial}{\partial \theta}\mat{R}(u;\theta) \\[1em]
\mat{0} & \mat{\Lambda}(u)
\end{pmatrix}\! \dd u
  \right)  = \begin{pmatrix}
   \mat{D}(t,T) & \frac{\partial}{\partial \theta}\mat{V}(t;\theta) \\[1em]
   \mat{0} & \mat{P}(t,T) 
  \end{pmatrix}. 
\end{align}
\begin{rem}\rm
Similar kinds of derivatives as those of \eqref{eq:deriv_prod_int} are considered in \cite{Kalahsnikov2003}, where differential equations for reserves concerning valuation elements and payments are derived. The formulas presented here may thus be seen as corresponding matrix representations.  \demoo
\end{rem}
If state $i\in E$ is the starting state, we can formulate the equivalence principle by finding the $\theta$ that solves
\[   V_i^{Th}(0;\theta)=\vect{e}_i^\prime \mat{V}(0;\theta)\vect{e} = 0    \]
using Newton's method,  
\[    \theta_{n+1}=\theta_{n} - \frac{ \vect{e}_i^\prime \mat{V}(0;\theta)\vect{e}}{\vect{e}_i^\prime \mat{V}_{\theta}(0;\theta)\vect{e}} ,  \] 
where $\mat{V}_{\theta}$ denotes the partial derivative wrt.\ $\theta$.\ For example, if $\theta$ is a constant premium (rate) such that
\[   \mat{R}_\theta (t;\theta) = \mat{A}(t) ,  \]
i.e.\ a matrix function not depending on $\theta$, then $\mat{V}_{\theta}(t;\theta) = \mat{V}_{\theta}(t)$ will not depend on $\theta$ either, so we conclude that the map $\theta \mapsto V^{Th}_i(t;\theta)$ is linear (for fixed $t$), so that in particular 
\[   V^{Th}_i(0;\theta) = a\theta + b    \]
for some constants $a,b$.\ Then $b$ can be computed from $b=V^{Th}_i(0;0) = \vect{e}_i^\prime \mat{V}(0;0)\vect{e}$ and $a=\vect{e}_i^\prime \mat{V}_{\theta}(0;0)\vect{e}$. 
Hence, Newton's method converges in one iteration, and the $\theta$ which fulfils the equivalence principle is given by 
\begin{align}\label{eq:premium_affine}
\theta = -\frac{ \vect{e}_i^\prime \mat{V}(0;0)\vect{e}}{\vect{e}_i^\prime \mat{V}_{\theta}(0;0)\vect{e}}.
\end{align}
Hence, this formula can compute the equivalence premium if it is assumed to be (piecewise) constant over time, which is often the case in practical examples. However, the formulation in terms of derivatives is usually not seen, with \cite[(3.5)]{Kalahsnikov2003} being one of few exceptions. If the constancy assumption is not satisfied, a parametrised expression in terms of $\theta$ can be calculated by Newton's method.

\subsection{Distributions of future payments based on reduced moments}\label{sec:GC}\mbox{ }

In this section, we briefly comment on the implementation of the Gram--Charlier series for the density and distribution functions based on reduced moments, {\farv{blue} following along the lines of \cite{Bladt2020}; for an approach based on PDEs and integral equations (though not implemented numerically), we refer to \cite[Section 5]{norberg-ano}}. 

The goal is to approximate the distribution of
\[   X = \int_0^T \e^{-\int_0^x r_{X(u)}(u)\dd u} \dd B(x)  \]
using a Gram--Charlier series expansion. In \cite{Bladt2020}, it was shown that under suitable regularity conditions, the density $f$ for $X$ can be approximated by
\[  f(x) \approx f^*(x)\sum_{n=0}^N c_n p_n(x) ,\]
where $f^*$ is a reference density, $p_n(x)$ an orthonormal basis of polynomials for Hilbert space $L^2(f^*)$,
and $c_n = \Exp (p_n (X))$. The reference distribution $f^*$ can be chosen arbitrarily as long as $f/f^*\in L^2(f^*)$. Hence it is advisable to choose $f^*$ as close to $f$ as possible. 

For a given reference density $f^*$, the polynomials 
 \[ 
q_n(x) = 
\left|
\begin{array}{cccc}
a_0  & \cdots & a_{n-1} & 1 \\
a_1 &  \cdots & a_n & x \\
& & \ddots & \\
a_n & \cdots & a_{2n-1} & x^n 
\end{array}
\right| ,
 \]
 where
 \[  a_n = \int_a^b x^n f^*(x)\dd x , \ \ \ n=0,1,...  \]
defines an orthogonal basis for Hilbert space $L^2(f^*)$ with inner product
 \[   \langle g,h \rangle  = \int_a^b g(x)h(x)f^*(x)\dd x . \]
With the Hankel determinants
\[  A_{-1}=1, \ \ \ \ \ A_n = \left|
\begin{array}{cccc}
a_0  & \cdots & a_{n-1} & a_n \\
a_1 &  \cdots & a_n & a_{n+1} \\
& & \ddots & \\
a_n & \cdots & a_{2n-1} & a_{2n}
\end{array}
\right| , n=0,1,..... \]
it can then be shown that
\[  p_n(x) = \frac{q_n(x)}{\sqrt{A_{n-1} A_{n}}}, \ \ n=0,1,...  \]
is an orthonormal basis (ONB) in $L^2(f^*)$. Also, it is immediate that
\[  c_n=\Exp (p_n(X)) = \frac{1}{\sqrt{A_{n-1}A_n}} \left|
\begin{array}{cccc}
a_0  & \cdots & a_{n-1} & 1 \\
a_1 &  \cdots & a_n & \Exp (x) \\
& & \ddots & \\
a_n & \cdots & a_{2n-1} & \Exp (x^n) 
\end{array}
\right|  . \]
If $f^*$ is chosen to be the standard normal distribution, the corresponding polynomials $p_n$ are the (probabilists) Hermite polynomials. While the Hermite polynomials were used in \cite{Bladt2020} up to very high orders, their use in the following example fails already at low orders. This is likely caused by the tail of the normal distribution being too light. We propose a class of reference distributions based on a shifted beta distribution closely related to the Jacobi Polynomials as an alternative. This distribution will have finite support but a much heavier tail. Finite support is usually not a problem in a life insurance context.

Define a reference distribution $f^*$ with support on a finite interval $[a,b]$ by
\[  f^*(x)=\frac{\Gamma (\alpha+\beta+2)}{\Gamma (\alpha+1)\Gamma (\beta+1)} (b-a)^{-\alpha-\beta-1} (b-y)^\alpha  (y-a)^\beta, \ \  x \in [a,b], \ \alpha,\beta >-1 . \]
Thus we need to find an orthonormal basis for $L^2(f^*)$. The starting point is the weight function
\[  w^{\alpha,\beta}(x) = (1-x)^\alpha(1+x)^\beta . \]
The space $L^2(w)$ has an orthogonal basis of Jacobi polynomials given by
\[ q_{n}^{(\alpha, \beta)}(x)=\frac{(\alpha+1)_{n}}{n !} \sum_{k=0}^{n} \frac{(\alpha+\beta+1+n)_{k}(-n)_{k}}{(\alpha+1)_{k} k !}\left(\frac{1-x}{2}\right)^{k} ,  \]
where $(a)_n=a(a+1)\cdots (a+n-1)$ denotes the Pochammer symbol. 

By normalizing the weight function into a density on $[-1,1]$ and then transforming it into a density on $[a,b]$, we obtain an ONB for $f^*$ of polynomials given by 
\[  p_n^{\alpha,\beta}(x)=\sqrt{\frac{n !(2 n+\alpha+\beta+1) (\alpha+\beta+1)_{n}}{ (\alpha+1)_n (\beta+1)_n (\alpha+\beta+1) }  } q_{n}^{(\alpha, \beta)}\left( \frac{2x -a-b}{b-a} \right) .  \]
So for given $a,b$, we need to compute
\begin{eqnarray*}
c_n=\Exp \left( p_{n}^{(\alpha, \beta)}(X)   \right) &=& 
\sqrt{\frac{n !(2 n+\alpha+\beta+1) (\alpha+\beta+1)_{n}}{ (\alpha+1)_n (\beta+1)_n (\alpha+\beta+1) }  } \Exp \left( q_{n}^{(\alpha, \beta)}\left( \frac{2X-a-b}{b-a}\right)   \right) .
\end{eqnarray*}
Here
\begin{eqnarray*}
\lefteqn{\Exp \left( q_{n}^{(\alpha, \beta)}\left( \frac{2X-a-b}{b-a}\right)   \right)}~~\\
 &=& \frac{(\alpha+1)_{n}}{n!} \sum_{k=0}^{n} \frac{(\alpha+\beta+1+n)_{k}(-n)_{k}}{(\alpha+1)_{k}} \frac{1}{k!} \Exp \left( \left(\frac{1- (2X-a-b)/(b-a) }{2}\right)^{k} \right) ,
\end{eqnarray*}
where the inner expectation is computed as
\begin{eqnarray*}
\frac{1}{k!}\Exp \left( \left(\frac{1- (2X-a-b)/(b-a) }{2}\right)^{k} \right) 
&=&\frac{1}{(b-a)^k}\sum_{i=0}^k (-1)^i \frac{b^{k-i}}{(k-i)!} \frac{\Exp (X^i)}{i!} .
\end{eqnarray*}
Finally, the approximation is then given by 
\begin{equation}
  f(x)\approx f^*(x) \sum_{n=0}^N c_n  p_{n}^{(\alpha, \beta)}(x)  . \label{eq:GC_dens_jacobi}
\end{equation}

Concerning the corresponding distribution function, we integrate the above equation to obtain 
\begin{eqnarray*}
  F(y)&\approx& 
  F^*(y) -\frac{b-a}{4} \left(1 - \left(\frac{2y-a-b}{b-a}\right)^2 \right) f^*(y)\\
  &&\hspace{-3mm}\times   \sum_{n=1}^Nc_n\sqrt{ \frac{1}{n}\,\frac{(2+\alpha+\beta)(\alpha+\beta+3)}{(1+\alpha)(1+\beta)(\alpha+\beta+n+1)(\alpha+\beta+n+2)} } p_{n-1}^{(\alpha+1, \beta+1)}\left(\frac{2y-a-b}{b-a}\right).
\end{eqnarray*}
Hence, these formulas can be used to approximate the density and distribution via these Jacobi types of polynomials.

\section{Numerical Example}\label{sec:numex}
We now present a numerical example based on Example \ref{ex:life_1}, where interest rates and biometric risk are assumed independent, 
where we carry over the estimation of interest transition rates from the calibrated bond prices of Section \ref{sec:discount}.

Consider the numerical example of \cite{buchardt2015} as the model for the biometric risk and corresponding life insurance contract. That is, the states of the insured $X_b$ are modelled as a time-inhomogeneous Markov jump process taking values $E_b = \{1,2,3\}$, the three-state disability model depicted in Figure \ref{fig:disability}.
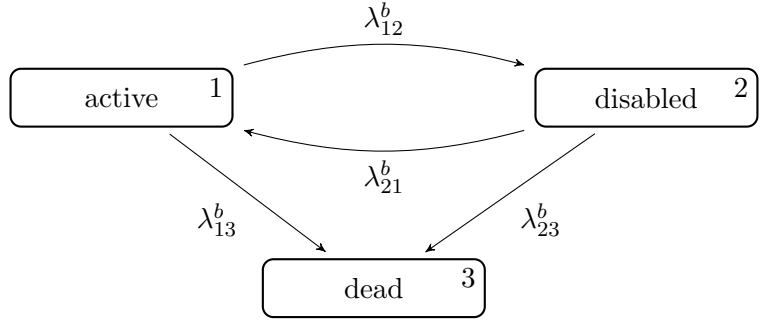
\begin{figure}[htb]
\centering
\begin{tikzpicture}[node distance=2em and 0em]
\node[punkt]                                       (2)    {disabled};
\node[anchor=north east, at=(2.north east)]               {$2$};
\node[punkt, left = 40mm of 2]                     (1)    {active};
\node[anchor=north east, at=(1.north east)]               {$1$};
\node[draw = none, fill = none, left = 20 mm of 2] (test) {};
\node[punkt, below = 20mm of test]                 (3)    {dead};
\node[anchor=north east, at=(3.north east)]               {$3$};    
\path
(1) edge [pil, bend left = 15] node [above]  {$\lambda^b_{12}$} (2)
(2) edge [pil]          node [below right] {$\lambda^b_{23}$} (3)
(1) edge [pil]          node [below left] {$\lambda^b_{13}$} (3)
(2) edge [pil, bend left = 15] node [below]  {$\lambda^b_{21}$} (1)
;
\end{tikzpicture}
\caption{The classic three-state disability model with reactivation}
\label{fig:disability}
\end{figure}

We consider a 40-year-old male today (at time $0$) with a retirement age of 65 and the following life insurance contract: 
\begin{itemize}
\item A disability annuity of rate $1$ while disabled until the retirement of age 65. 
\item A life annuity of rate $1$ while alive until the retirement of age 65.
\item A constant premium rate $\theta$ paid while active until the retirement of age 65, priced under the equivalence principle at time $0$.
\end{itemize}
The maximum contract time is $T = 70$, corresponding to a maximum age of 110 years. The transition rates are given by 
\begin{align*}
\lambda^b_{12}(s) &= \left(0.0004+10^{4.54+0.06(s+40)-10} \right)\!1_{(s\leq 25)}, \\
\lambda_{21}^b(s) &= \left(2.0058 e^{-0.117(s+40)}\right)\!1_{(s\leq 25)}, \\
\lambda_{13}^b(s) &= 0.0005+10^{5.88+0.038(s+40)-10}, \\
\lambda_{23}^b(s) &=  \lambda_{13}^b(s)\!\left(1+1_{(s\leq 25)}\right)\!. 
\end{align*} 
The payment matrices for this product combination corresponds to having $\mat{B}(t) = \mat{\Lambda}^1(t) = \mat{0}$, and
\begin{align*}
\vect{b}(t; \theta) = \begin{cases}
(\theta, 1, 0), \qquad &\text{for} \ t\leq 25 \\[0.2cm]
(1,1,0), &\text{for} \ t>25
\end{cases}.
\end{align*} 
For the stochastic interest rate model, we take the fitted bond prices from Example \ref{ex:fitting_bond_prices} with $p = 4$ phases, so that the interest rates are given as $r(t) = r_{X_r(t)}$, with 
\begin{align*}
\vect{r} = (0.025, 0.050, 0.075, 0.100),
\end{align*} 
and where $X_r$ is a time-homogeneous Markov jump process taking values the finite state space $E_r = \{1,2,3,4\}$ with initial distribution $\vect{\pi} = (1,0,0,0)$ and transition intensity matrix  
\begin{align*}
\mat{\Lambda}_r = \begin{pmatrix}
-0.25 & 0.22 & 0.01 & 0 \\ 0.14 & -1.11 & 0.75 & 0.18 \\ 0.06 & 0.29 & -0.63 & 0.2 \\ 0.09 & 0.22 & 0.65 & -1.05 
\end{pmatrix} + \mat{\Delta}(\vect{r}).
\end{align*}
We then determine the equivalence premium $\theta$ using the method outlined in Section \ref{sec:premiums}. This is explicit on the form \eqref{eq:premium_affine} due to $\vect{b}(t;\cdot)$ being affine (for fixed $t$), and we get $\theta = 0.1583467$.\ This is almost three times lower than the premium rate obtained when pricing with a constant first-order interest rate of $1\%$ as in \cite{buchardt2015}, which makes sense since the present interest rate model always gives interest rates above this level. 

We then calculate moments of up to order 20 of the present value of future payments to approximate its density and distribution function via Gram-Charlier expansions based on the (shifted) Jacobi polynomials, as outlined in Section \ref{sec:GC}. The parameters used in the procedure are shown in Table \ref{tab:Par_GC_BM}, and the resulting density and distribution function are shown in Figure \ref{fig:dens_dist_BM_example}.  
\begin{table}
\centering
\begin{tabular}{c|cccc}
Parameter & $\alpha$ & $\beta$ & $a$ & $b$ \\ \hline Value  
 & 1
 & 0.05 
 & -3
 & 70
\end{tabular}
\caption{Parameters for the Gram-Charlier implementation with (shifted) Jacobi polynomials.}
\label{tab:Par_GC_BM}
\end{table}
\begin{figure}
  \centering
   \includegraphics[scale=0.42]{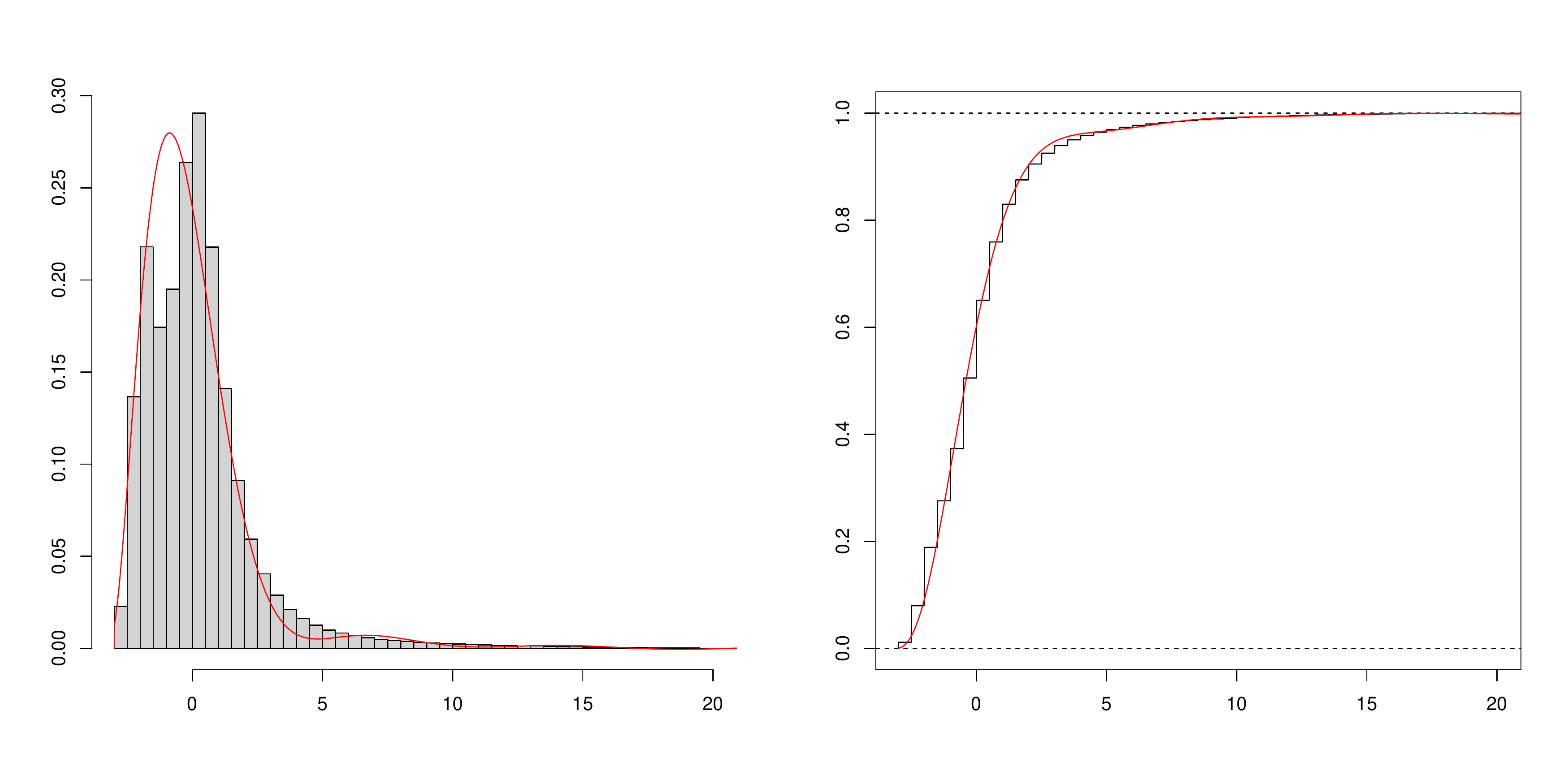}
  \caption{Left:\ Density approximation based on 20 moments and a histogram based on $1,\!000,\!000$ simulations. Right:\ Distribution function approximation based on the same 20 moments and the empirical distribution function from the same simulations.}
  \label{fig:dens_dist_BM_example}
\end{figure}
From the fitted distribution function, one may compute different quantities of interest, e.g., \ quantiles of the present value. In Table \ref{tab:quantiles_BM}, we show various quantiles based on the empirical (simulated) distribution function and the approximated distribution function based on 20 moments. 
\begin{table}[H]
\centering
\begin{tabular}{c|cccc}
  &   \multicolumn{4}{c}{Quantile} \\[0.2 cm]
  &   $95\%$ & $97\%$ & $99\%$ & $99.5\%$ \\ \hline                            
Empirical & 3.51 & 5.51 & 9.51 & 12.01 \\  
Moment-based  & 3.13 & 5.54 & 8.89 & 12.63
\end{tabular}
\caption{Selected quantiles of the present value based on the empirical distribution of $1,\!000,\!000$ simulations and based on the Gram-Charlier approximation based on 20 moments.}
\label{tab:quantiles_BM}
\end{table}

\newpage

\appendix

\section{Modified EM algorithm for phase--type fitting with fixed exit rate vector}\label{sec:app-EM}
First, we consider the case we want to fit a phase--type distribution with parameters $(\vect{\pi},\mat{T}(x))$ to data $y_1,...,y_N$. Here the data are positive real numbers which are thought of as the time until absorption of the underlying Markov process with intensity matrix
\[  \begin{pmatrix}
\mat{T}(x) & \vect{t}(x) \\
\vect{0} & 0 
\end{pmatrix} .   \]
We will assume throughout that $\mat{T}(x)=\mat{T}$, i.e. the Markov process is time--homogeneous. This presents no restriction as long as the interest rate process to be approximated is assumed to be stationary.

If, additionally to the absorption times, we could observe the full trajectories of the Markov process until absorption, then the estimation would be an easy task. In this case, for $i\neq j$
 \begin{equation}
    \hat{\pi}_i =\frac{B_i}{N}, \hat{t}_{ij}=\frac{N_{ij}}{Z_i}, \hat{t}_i=\frac{N_i}{Z_i} \label{eq:mle}
\end{equation}
 whereas $\hat{t}_{ii}=-\hat{t}_i-\sum_{j\neq i}\hat{t}_{ij}$. Here $B_i$ denotes the number of processes starting in state $i$, $N_{ij}$ the number of transitions from $i$ to $j$ in all processes, $N_i$ the number of processes that exits to the absorbing state from state $i$ and $Z_i$ the total time all processes spend in state $i$. 

In the case of incomplete data, where only absorption times are observed, the EM---algorithm can be employed. The idea is to replace the unobserved sufficient statistics $B_i$, $N_{ij}$, $N_i$ and $Z_i$ by the their conditional expectations given data, i.e. $\Exp (B_i| Y=y)$ etc.
The EM--algorithm then alternates between computing these conditional expected values (E--step) and plugging them into \eqref{eq:mle} as a substitute, thereby generating new parameters. 

To perform maximization under the constraint $\vect{t}(x)=\vect{t}=\vect{r}$, we see that this can be achieved simply by removing the update $\hat{t}_i=\frac{N_i}{Z_i}$ from the EM--algorithm, so that the $M$--step amounts to 
 \begin{equation}
    \hat{\pi}_i =\frac{B_i}{N}, \hat{t}_{ij}=\frac{N_{ij}}{Z_i}, \ i\neq j, \ \hat{t}_{ii}=-\hat{r}_i-\sum_{j\neq i}\hat{t}_{ij} .  \label{eq:mle2}
\end{equation}

Fitting a PH distribution to a theoretical distribution is done by approximating the theoretical distribution into a histogram. Hence data will be the discretisation points, and the density values will be the corresponding weights. For further details on the EM--algorithm, we refer to \cite{AsmussenEM} or \cite{Albrecher-Bladt-Yslas-2020}.

\section{Proofs}\label{sec:proofs}
\begin{proof}[Proof of Theorem \ref{th:reserve}]
First, we notice that, for $i,j\in  E$, 
\begin{eqnarray*}
V_{ij}(s,t)&=& \sum_{k\in E} \Exp \left. \left( 1\{ X(t)=j\} \int_s^t1\{ X(x)=k  \} \e^{-\int_s^x r_{X(u)}(u)\dd u} \dd B(x)  \right|  X(s)=i  \right) \\
 &&\hspace{-2cm}= \sum_{k\in E} \int_s^t \Exp \left. \left( 1\{ X(t)=j\} 1\{ X(x)=k  \} \e^{-\int_s^x r_{X(u)}(u)\dd u} \dd B(x)  \right|  X(s)=i  \right) \\
&&\hspace{-2cm}=\sum_{k\in  E} \int_s^t \Exp \left. \left( \Exp \left. \left( 1\{ X(t)=j\} 1\{ X(x)=k  \} \e^{-\int_s^x r_{X(u)}(u)\dd u} \dd B(x)\right| {\mathcal F}_x \right) \right|  X(s)=i  \right) \\
&&\hspace{-2cm}=\sum_{k\in  E} \int_s^t \Exp \left. \left( 1\{ X(x)=k  \} \e^{-\int_s^x r_{X(u)}(u)\dd u}  \Exp \left. \left( 1\{ X(t)=j\}\dd B(x) \right| {\mathcal F}_x \right) \right|  X(s)=i  \right) .\\
\end{eqnarray*}
But on the event $\{ X(x)=k\}$, 
\[ \Exp \left. \left( 1\{ X(t)=j\}\dd B(x) \right| {\mathcal F}_x \right) =
b_k(x)\dd x\  p_{kj}(x,t) + \sum_{\ell\in  E\atop \ell\neq k} b_{k\ell}(x)\nu_{k\ell}(x)\dd x\ p_{\ell j}(x,t)
     \]

so
\begin{eqnarray*}
V_{ij}(s,t)&=&\sum_{k\in  E} \int_s^t p_{kj}(x,t)\Exp \left. \left( 1\{ X(x)=k  \} \e^{-\int_s^x r_{X(u)}(u)\dd u} 
  \right|  X(s)=i  \right)b_k(x)\dd x\\
  &&\hspace{-2cm}+   \sum_{k\in E} \int_s^t \Exp\! \left. \left( 1\{ X(x)=k  \} \e^{-\int_s^x r_{X(u)}(u)\dd u} 
 \left(   \sum_{\ell\in E\atop \ell\neq k} b_{k\ell}(x)\nu_{k\ell}(x)\dd x\ p_{\ell j}(x,t)  \right) \right|  X(s)=i  \right)\\
 &&\hspace{-1.5cm}= \sum_{k\in E} \int_s^t p_{kj}(x,t)\Exp \left. \left( 1\{ X(x)=k  \} \e^{-\int_s^x r_{X(u)}(u)\dd u} 
  \right|  X(s)=i  \right)b_k(x)\dd x\\
  &&\hspace{-2cm}+   \sum_{k\in  E} \int_s^t \left(   \sum_{\ell\in \mathcal{S}\times E\atop \ell\neq k} b_{k\ell}(x)\nu_{k\ell}(x)\ p_{\ell j}(x,t)  \right)  \Exp \left. \left( 1\{ X(x)=k  \} \e^{-\int_s^x r_{X(u)}(u)\dd u} 
 \right|  X(s)=i  \right)  \dd x \\
 &&\hspace{-1.5cm}= \int_s^t  \sum_{k\in  E}  D_{ik}(s,x)b_k(x)p_{kj}(x,t)\dd x + \int_s^t \sum_{k,\ell\in E \atop \ell \neq k } D_{ik}(s,x)b_{k\ell}(x)\nu_{k\ell}(x) p_{\ell j}(x,t)\dd x .
\end{eqnarray*}
In matrix form this amounts to \eqref{eq:reserve}.
\end{proof}
\begin{proof}[Proof of Theorem \ref{th:thiele}]
Using that the product integral satisfies Kolmogorov's forward and backward equations, we get that
\begin{eqnarray*}
 \begin{pmatrix}
\frac{\partial}{\partial s} \mat{D}(s,t) &  \frac{\partial}{\partial s}\mat{V}(s,t) \\ 
\mat{0} & \frac{\partial}{\partial s}\mat{P}(s,t)
\end{pmatrix} &=& \frac{\partial}{\partial s} \prod_s^t \left( \mat{I}  + \begin{pmatrix}
\mat{\Lambda}(u) -\mat{\Delta}(\vect{r}(u)) & \mat{R}(u) \\
\mat{0} & \mat{\Lambda}(u)
\end{pmatrix}  \dd u\right) \\
&=& - \begin{pmatrix}
\mat{\Lambda}(s) -\mat{\Delta}(\vect{r}(s)) & \mat{R}(s) \\
\mat{0} & \mat{\Lambda}(s)
\end{pmatrix}  \begin{pmatrix}
\mat{D}(s,t) &  \mat{V}(s,t) \\ 
\mat{0} & \mat{P}(s,t)
\end{pmatrix},
 \end{eqnarray*} 
from which Thiele's differential equation can be pulled out from the upper right corner of each side of the equation.
 \end{proof}

\begin{proof}[Proof of Theorem \ref{th:moments}]

Write
\begin{eqnarray}
 \lefteqn{ \left( \int_t^T\e^{-\int_t^x r_{X(u)}\dd u} \dd A(x)  \right)^k}~~ \nonumber \\ &=& \int_t^T\cdots \int_t^T \e^{-\int_t^{x_1} r_{X(u)}\dd u}\cdots \e^{-\int_t^{x_k} r_{X(u)}\dd u} \dd A(x_k)\cdots \dd A(x_1) . \label{app:proof1}
  \end{eqnarray}

Now 
\[  \dd A (t) = b^{X(t)}\dd t + b^{X(t-)j}(t)\dd N^{X(t-)j}(t) , \]
and assume that $s\in [t,T]$ is a point of increase for the counting process $x\rightarrow N^{ab}(x)$ which trigger lump sum payments. 
Then in the computation of the above integral, there will be jump contributions at time $s$, where any number $m\in \{1,2,...,k\} $ of the variables $x_1,...,x_k$ may be equal to $s$, say $x_{i_1}=\cdots x_{i_m}=s$. We can pick $m$ out of the $k$ variables in ${k \choose m}=k!/(m! (k-m)!)$ ways. If $m$ variables coincide at the jump time $s$, then a contribution of $b^{ab}(s)^m$ is added. 
Hence only looking at jump coincidences, i.e. $m\geq 2$, the contribution to the integral \eqref{app:proof1} is
\begin{eqnarray*}
 \lefteqn{ \sum_{m=2}^k {k \choose m}\int_t^T \e^{-m \int_t^{s} r_{X(u)}\dd u}  b^{ab}(s)^m }~~\\
 && \times \Bigg( \int_{s}^T \cdots \int_s^T  \e^{-\int_t^{x_{m+1}} r_{X(u)}\dd u}\cdots \e^{-\int_t^{x_k} r_{X(u)}\dd u} \dd A(x_k)\cdots \dd A(x_{m+1}) \Bigg) \dd N^{ab}(s)  .
\end{eqnarray*}

Indeed, since there are precisely $m$ coincidences, the remaining integrals must start from $s+=s$; otherwise, the integration intervals would contain $s$ as well. Changing the lower limits of the integrals appearing in the exponentials, we can further rewrite the expression as

\begin{equation}\label{app:proof2}
\begin{split}
\sum_{m=2}^k{k \choose m}\int_t^T \e^{-k \int_t^{s} r_{X(u)}\dd u}  b^{ab}(s)^m & \\
 &\hspace{-5cm}\times \Bigg( \int_{s}^T \cdots \int_s^T  \e^{-\int_s^{x_{m+1}} r_{X(u)}\dd u}\cdots \e^{-\int_s^{x_k} r_{X(u)}\dd u} \dd A(x_k)\cdots \dd A(x_{m+1}) \Bigg) \dd N^{ab}(s)  .
\end{split}
\end{equation}

Taking conditional expectation $\Exp ( 1\{X(T)=j\} \cdot | X(t)=i)$ of \eqref{app:proof2}, we get 
\begin{eqnarray}
 \lefteqn{ \sum_{m=2}^k{k \choose m}\int_t^T \Exp \bigg(1\{ X(T)=j \} \e^{-k \int_t^{s} r_{X(u)}\dd u}  b^{ab}(s)^m }~~\nonumber \\
 &&\hspace{-1.0cm} \times \Bigg( \int_{s}^T \cdots \int_s^T  \e^{-\int_s^{x_{m+1}} r_{X(u)}\dd u}\cdots \e^{-\int_s^{x_k} r_{X(u)}\dd u} \dd A(x_k)\cdots \dd A(x_{m+1}) \Bigg) \dd N^{ab}(s)   \bigg| X(t)=i \bigg) \nonumber \\
 &=& \sum_{m=2}^k{k \choose m}\int_t^T \Exp \left( \left. 1\{ X(T)=j \} Y \, Z  \dd N^{ab}(s) \right| X(t)=i \right),
 \label{app:proof3} 
\end{eqnarray}
where $Y=\e^{-k \int_t^{s} r_{X(u)}\dd u}  b^{ab}(s)^m $ and
\[  Z=  \int_{s}^T \cdots \int_s^T  \e^{-\int_s^{x_{m+1}} r_{X(u)}\dd u}\cdots \e^{-\int_s^{x_k} r_{X(u)}\dd u} \dd A(x_k)\cdots \dd A(x_{m+1}) .\]
Further conditioning on a lump sum triggering event at time $s$, caused by $\{ X(s)=b, X(s-)=a\}$ and the probability of which is  $d_{ab}(s)\dd s$, and using that $1\{ X(T)=j\}Z$ and $1\{ X(s-)=a\}Y$ are conditionally independent given 
$X(s)=b$, \eqref{app:proof3} reduces to
\begin{eqnarray}
&&\sum_{m=2}^k{k \choose m}\int_t^T  \Exp \Bigg( 1\{ X(T)=j \} Z \bigg| X(s)=b \Bigg) \Exp \bigg(1\{ X(s)=a\} Y  \bigg| X(t)=i \bigg)d_{ab}(s)\dd s\nonumber \\
&=& \sum_{m=2}^k{k \choose m}\int_t^T D^{(k)}_{ia}(t,s)d_{ab}(s) b^{ab}(s)^m V^{(k-m)}_{bj}(s,T)\dd s . \label{app:proof4}
\end{eqnarray}
Summing over $a$ and $b$, and putting \eqref{app:proof4} on matrix form (in $i,j$) this amounts to
 \begin{equation}
   \sum_{m=2}^k{k \choose m}\int_t^T \mat{D}^{(k)}(t,s)\mat{C}^{(m)}(s) \mat{V}^{(k-m)}(s,T)\dd s . \label{app:proof5}
\end{equation}

Now we consider the integral when there are no coincidences. To this end, we rewrite
\begin{eqnarray*}
   \left( \int_t^T\e^{-\int_t^x r_{X(u)}\dd u} \dd A(x)  \right)^k &=& k\int_t^T \e^{-\int_t^x r_{X(u)}\dd u}  \left( \int_x^T \e^{-\int_t^y r_{X(u)}\dd u} \dd A(y)  \right)^{k-1} \dd A(x) \\
   &=& k\int_t^T \e^{-k\int_t^x r_{X(u)}\dd u}  \left( \int_x^T \e^{- \int_x^y r_{X(u)}\dd u} \dd A(y)  \right)^{k-1} \dd A(x)
\end{eqnarray*}

Then
\begin{eqnarray}
\lefteqn{V_{ij}^{(k)}(t,T)=\Exp \left. \left( 1\{ X(T)=j \} \left( \int_t^T\e^{-\int_t^x r_{X(u)}\dd u} \dd A(x)  \right)^k \right| X(t)=i \right)}~~ \nonumber \\ 
&=& k \int_t^T  \Exp \left. \left( 1\{ X(T)=j \} \e^{-k\int_t^x r_{X(u)}\dd u}   \left( \int_x^T\e^{-\int_x^y r_{X(u)}\dd u} \dd A(y)  \right)^{k-1} \dd A(x)  \right| X(t)=i \right)\nonumber \\
&=&k \sum_{\ell} \int_t^T  \Exp \bigg(  1\{ X(x)=\ell \}  \e^{-k\int_t^x r_{X(u)}\dd u}1\{ X(T)=j \}\nonumber \\
&&\times   \Exp \bigg( 1\{ X(T)=j \} \left( \int_x^T\e^{-\int_x^y r_{X(u)}\dd u} \dd A(y)  \right)^{k-1} \dd A(x) \bigg| X(x)=\ell \bigg)  \bigg| X(t)=i \bigg) .\label{app:proof6}
\end{eqnarray}
On the event that $X(x)=\ell$, the contribution to the expectation of the above integral \eqref{app:proof6}, where no coincidences are allowed (i.e., the reward at time $x$ from at most one jump and benefit rates), amounts to
\begin{eqnarray*}
 \lefteqn{\Exp \bigg( 1\{ X(T)=j \} \left( \int_x^T\e^{-\int_x^y r_{X(u)}\dd u} \dd A(y)  \right)^{k-1} \bigg| X(x)=\ell \bigg) b^\ell (x)\dd x }~~~\\
 && + \sum_{m} \Exp \bigg( 1\{ X(T)=j \} \left( \int_x^T\e^{-\int_x^y r_{X(u)}\dd u} \dd A(y)  \right)^{k-1} \bigg| X(x)=m \bigg) d_{\ell m}(x) b^{\ell m} (x)\dd x \\
 &=& b^{\ell}(x) V_{\ell j}^{(k-1)}(x,T)\dd x + \sum_m d_{\ell m}(x)b^{\ell m}(x) V_{mj}^{(k-1)}(x,T)\dd x,
\end{eqnarray*}
and the integral \eqref{app:proof6} then equals
\begin{eqnarray*}
\lefteqn{k \sum_{\ell} \int_t^T  \Exp \bigg(  1\{ X(x)=\ell \}  \e^{-k\int_t^x r_{X(u)}\dd u} \bigg| X(t)=i\bigg)}~~ \\
&& \times \bigg( b^{\ell}(x) V^{(k-1)}_{\ell j}(x,T) + \sum_m d_{\ell m}(x)b^{\ell m}(x) V_{mj}^{(k-1)}(x,T)\bigg) \dd x
\end{eqnarray*}
which in matrix form amounts to
\begin{equation}
 k \int_t^T \mat{D}^{(k)}(t,x)\mat{R}(x) \mat{V}^{(k-1)}(x,T)\dd x .\label{app:proof7} \end{equation}
Adding \eqref{app:proof5} and \eqref{app:proof7} then prove the result.
\end{proof}

\bibliographystyle{plainnat}
%\bibliography{refs}{}

\end{document}